\documentclass[sigconf, nonacm]{acmart}
\usepackage{csquotes}
\usepackage{booktabs}
\usepackage{enumitem}
\usepackage{caption}
\usepackage{subcaption}
\usepackage{wrapfig}
\usepackage{siunitx}
\usepackage{multirow}
\usepackage{flushend}

\usepackage{color}
\usepackage{colortbl}

\DeclareMathOperator{\var}{\mathbf{var}}
\DeclareMathOperator{\sort}{sort}
\DeclareMathOperator{\Def}{Def}
\DeclareMathOperator{\Dom}{Dom}
\DeclareMathOperator{\INF}{INF}
\DeclareMathOperator{\Pos}{Pos}
\DeclareMathOperator{\E}{E}
\newcommand{\eps}{\varepsilon}
\newcommand{\from}{\leftarrow}
\newcommand{\mcA}{\mathcal A}
\newcommand{\mcB}{\mathcal B}
\newcommand{\mcP}{\mathcal P}
\newcommand{\mcX}{\mathcal X}
\newcommand{\N}{\mathbb N}

\usepackage[english]{babel}
\usepackage{amsthm}

\theoremstyle{remark}
\newtheorem{theorem}{Theorem}[section]
\newtheorem{example}[theorem]{Example}
\newtheorem{definition}[theorem]{Definition}
\newtheorem{proposition}[theorem]{Proposition}
\newtheorem{lemma}[theorem]{Lemma}

\begin{document}
\copyrightyear{2024}

\title{A Plaque Test for Redundancies in Relational Data [Extendend~Version]}

\author{Christoph Köhnen}
\affiliation{%
  \institution{Chair of Scalable Database Systems\\University of Passau}}
\email{christoph.koehnen@uni-passau.de}

\author{Stefan Klessinger}
\affiliation{%
  \institution{Chair of Scalable Database Systems\\University of Passau}}
\email{stefan.klessinger@uni-passau.de}

\author{Jens Zumbrägel}
\affiliation{%
  \institution{Professorship of Cryptography\\University of Passau}}
\email{jens.zumbraegel@uni-passau.de}

\author{Stefanie Scherzinger}
\affiliation{%
  \institution{Chair of Scalable Database Systems\\University of Passau}}
\email{stefanie.scherzinger@uni-passau.de}

\begin{abstract}
Inspired by the visualization of dental plaque at the dentist's office, this article proposes a novel visualization of redundancies in relational data. 
Our approach is based on a well-principled information-theoretic framework that has so far seen limited practical application in systems and tools.  In this framework, we quantify the information content (or entropy) of each cell in a relation instance given a set of functional dependencies.  The entropy value signifies the likelihood of recovering the cell value based on the dependencies and the remaining tuples.  By highlighting cells with lower entropy, we effectively visualize redundancies in the data.
We present an initial prototype implementation and demonstrate that a straightforward approach is insufficient to handle practical problem sizes.  To address this limitation, we propose several optimizations which we prove to be correct.  In addition, we present a Monte Carlo approximation with a known error, enabling a computationally tractable analysis.  By applying our visualization technique to real-world datasets, we showcase its potential.  Our vision is to empower data analysts by directing their focus in data profiling toward pertinent redundancies, analogous to the diagnostic role of a plaque test at the dentist's office.

\end{abstract}

\maketitle

\section{Introduction}

Database normalization is a well-studied field, with the theory of functional dependencies as a cornerstone, cf.~\cite{DBLP:books/aw/AbiteboulHV95}. 
There are several proposals to visualize functional dependencies, such as sunburst diagrams or graph-based visualizations~\cite{10.1145/3132847.3133180}.
Yet these approaches only visualize the dependencies, irrespective of the data instance.

We propose a novel visualization that reveals the redundancies captured by functional dependencies. We refer to our approach as \enquote{plaque test}: Like a plaque test at the dentist, which colors dental plaque to reveal unwanted residue on patients' teeth, our plaque test reveals redundancies in relational data.
The plaque test may be applied in data profiling, when dependencies have been discovered,  before carrying out data cleaning tasks, or when preparing towards schema normalization --- in all these scenarios,  domain experts will want to explore the redundancies resident in their data. 

Formally, our approach is based on an existing information-theoretic framework developed by Arenas and Libkin~\cite{DBLP:conf/pods/ArenasL03}.  To our knowledge, we are the first to implement and apply this framework to a practical problem.

Next, we illustrate our concept of plaque tests.

\begin{example}
Figure~\ref{fig:relation} shows an example from the German Wikipe\-dia site on database normalization\footnote{
Retrieved from \url{https://de.wikipedia.org/wiki/Normalisierung_(Datenbank)}, last accessed September 2024.}. 
The relation manages a CD collection.  Each CD has an identifier (ID), an album title (Album), and was released in a given year (RYear).  The band that released the CD (Band) was also founded in a given year (BYear).  Each CD has several tracks (Track) and each track has a title (Title).

We consider the following functional dependencies (not stated as a canonical cover):
\begin{align*}
\mbox{ID} &\rightarrow \mbox{Album, Band, BYear, RYear}\\
\mbox{ID, Track} &\rightarrow \mbox{Title} \qquad \mbox{and} \qquad 
\mbox{Band} \rightarrow \mbox{BYear}
\end{align*}
\begin{figure}[b]
\centering

\definecolor{babyblue}{rgb}{0.54, 0.81, 0.94} 
\definecolor{ballblue}{rgb}{0.13, 0.67, 0.8} 
\definecolor{blue(munsell)}{rgb}{0.0, 0.5, 0.69} 
\definecolor{blue(ncs)}{rgb}{0.0, 0.4, 0.74} 

\begin{subfigure}[b]{0.49\textwidth}
\centering
\footnotesize 
\begin{tabular}%
{@~c@~c@~c@~c@~c@~c@~c@~}
\toprule
\underline{ID} & Album & Band & BYear & RYear & \underline{Track} & Title
\\
\midrule
1	
&Not That Kind
&Anastacia	
&1999 
&2000	
&1	
&Not That Kind
\\
1	
&Not That Kind
&Anastacia	
&1999 
&2000	
&2
&I’m Outta Love
\\
1	
&Not That Kind	
&Anastacia	
&1999	
&2000	
&3
&Cowboys \dots
\\
2	
&Wish You Were Here
&Pink Floyd	
&1965	
&1975	
&1	
&Shine On You\dots
\\
3	
&Freak of Nature	
&Anastacia	
&1999	
&2001	
&1	
&Paid my Dues
\\
\bottomrule
\end{tabular}
\caption{The relational input data.}
\label{fig:relation}
\end{subfigure}\medskip

\begin{subfigure}[b]{0.49\textwidth}
\centering
\footnotesize 
\setlength\tabcolsep{4.5pt}
\begin{tabular}%
{ccccccc}
\toprule
~\underline{ID}~ & \!Album\! & \,Band\, & BYear & RYear & \underline{Track} & Title
\\
\midrule
1
&\cellcolor{babyblue} 0.8
&\cellcolor{babyblue} 0.8
&\cellcolor{blue(munsell)} 0.6	
&\cellcolor{babyblue} 0.8	
&1	
&1
\\
1
&\cellcolor{babyblue} 0.8
&\cellcolor{babyblue} 0.8
&\cellcolor{blue(munsell)} 0.6	
&\cellcolor{babyblue} 0.8	
&1	
&1
\\
1	
&\cellcolor{babyblue} 0.8	
&\cellcolor{babyblue} 0.8
&\cellcolor{blue(munsell)} 0.6
&\cellcolor{babyblue} 0.8
&1	
& 1
\\
1
&1
&1
&1	
&1	
&1
& 1
\\
1 
&1
&1
& \cellcolor{ballblue} 0.7
&1	
&1	
&1
\\
\bottomrule
\end{tabular}
\caption{Entropies for 6 unary FDs.}
\label{fig:relation2}
\end{subfigure}\medskip

\begin{subfigure}[b]{0.49\textwidth}
\centering
\footnotesize 
\setlength\tabcolsep{4.5pt}
\begin{tabular}%
{ccccccc}
\toprule
~\underline{ID}~~ & \!Album\! & \,Band\, & BYear & RYear & \underline{Track} & Title
\\
\midrule
\cellcolor{blue(munsell)} 0.6
&\cellcolor{blue(munsell)} 0.6
&\cellcolor{blue(ncs)} 0.4	
&\cellcolor{blue(ncs)} 0.4	
&\cellcolor{blue(munsell)} 0.6	
&1	
&1
\\
\cellcolor{blue(munsell)} 0.6	
&\cellcolor{blue(munsell)} 0.6
&\cellcolor{blue(ncs)} 0.4	
&\cellcolor{blue(ncs)} 0.4	
&\cellcolor{blue(munsell)} 0.6
& 1
&1
\\
\cellcolor{blue(munsell)} 0.6
&\cellcolor{blue(munsell)} 0.6
&\cellcolor{blue(ncs)} 0.4	
&\cellcolor{blue(ncs)} 0.4	
&\cellcolor{blue(munsell)} 0.6	
&1	
&1
\\
1	
&1	
&1	
&1	
&1	
&1	
&1
\\
1	
&1	
&\cellcolor{ballblue} 0.7
&\cellcolor{ballblue} 0.7	
&1	
&1	
&1
\\
\bottomrule
\end{tabular}
\caption{Entropies for 23 unary FDs.}
\label{fig:relation3}
\end{subfigure}

\caption{Plaque tests for the original relation (top) with genuine functional dependencies (middle) and automatically discovered functional dependencies (bottom). Cell color/hue corresponds to entropy values.}
\label{fig:cds}
\end{figure}
Figure~\ref{fig:relation2} shows the plaque test applied to the instance in Figure~\ref{fig:relation}, based on these dependencies: Each cell is assigned an entropy value. Our plaque test then visualizes these entropy values by coloring the cells: the more redundant the value, the smaller the entropy value and the deeper the blue.

Specifically, let us consider the dependency \enquote{$\mbox{ID} \rightarrow \mbox{Album}$}.
In the given data instance, the album title \enquote{Not That Kind} with ID~1 is recorded redundantly.
If this information was lost in the first tuple, it could still be recovered from the second or third tuple (and vice versa).
This is captured by an entropy of~0.8 and blue color (denoting \enquote{plaque}). 
In contrast, if the album title \enquote{Freak of Nature} with ID~3 was lost, it could not be recovered. Consequently, the cell has entropy~1 and remains white (\enquote{no plaque}).

Let us next consider the dependency \enquote{Band $\rightarrow$ BYear}, stating that the band determines the year of foundation.
Since there are four tuples for the band Anastacia, the year of the foundation is more redundant than the year of the release of her album \enquote{Not That Kind} (recorded in three tuples). 
Accordingly, the entropy values differ and our plaque test colors the foundation years for Anastacia's band in a deeper blue. 

Finally, let us consider the dependency \enquote{ID $\rightarrow$ BYear}. Since we have three entries for the album with ID 1 (\enquote{Not That Kind}), and only one entry each for the albums with ID 2 (\enquote{Wish You Were Here}) and ID 3 (\enquote{Freak of Nature}), the band year is even more redundant for Anstacia's album \enquote{Not That Kind} (entropy value~0.6) than for her album \enquote{Freak of Nature} (entropy value~ 0.7). 
\end{example}

So far, we only considered genuine dependencies. Next, we assume that the dependencies have been automatically discovered during data profiling.

\begin{example}
The Metanome tool~\cite{DBLP:journals/pvldb/PapenbrockBFZN15} discovers~23 dependencies in this relation instance., including the cyclic dependencies \enquote{Band $\rightarrow$ BYear} and \enquote{\mbox{BYear} $\rightarrow$ \mbox{Band}}.

Figure~\ref{fig:relation3} shows the result of our plaque test, given this new set of functional dependencies.  Throughout, the coloring of the cells is darker.  Moreover, more cells are colored. 
This reveals that the instance now contains additional redundancies.

Notably, the entropy value for the band name \enquote{Anastacia} is as low as~0.4, since it may be recovered from several dependencies, such as \enquote{ID $\rightarrow$ Band} and \enquote{BYear $\rightarrow$ Band}.
Thus, plaque is additive when an attribute occurs on the right-hand side of several functional dependencies.
\end{example}

As these examples illustrate, the plaque test is highly sensitive to the set of functional dependencies and the specific values that appear in the data instance. 

\subsubsection*{Contributions}

Based on an existing information-theoretic framework,
we propose a \enquote{plaque test} as a novel visualization of redundancies in relational data. This is an extended version of~\cite{shortversion}.

\begin{itemize}

\item We show that a straightforward implementation for computing the entropy values underlying our plaque test does not scale beyond toy examples. 

\item Thus we propose several effective optimizations:
\begin{enumerate}
\item We discuss scenarios where entropy values can be immediately assigned to~1, thereby skipping computation, and where one can focus on a subset of the data instance when computing entropy values.  We prove the correctness of these optimizations.
\item We present a Monte Carlo approximation to compute entropy values.  We provide the formula to determine the number of iterations required to achieve a given accuracy with a certain confidence.
\end{enumerate}

\item We present visual plaque tests for several real-world datasets and discuss how the discovered \enquote{plaque} can indeed be helpful in data exploration. 

\item We conduct run-time experiments with our implementation and explore the effect of our optimizations. We show that our optimizations are indeed effective, but also that further optimizations are required if our approach is to scale for the sizes of real-world database instances.
\end{itemize}

\subsubsection*{Reproducibility}

Our research artifacts, including our prototype implementation, are available at~\url{https://doi.org/10.5281/zenodo.8220684}.

\subsubsection*{Structure}

We provide preliminaries on functional dependencies, entropy, and information content in Section~\ref{sec:prelim}.  In Section~\ref{sec:opt}, we introduce two lines of optimization, one is an exact method and the other an approximation.  We present our experiments in Section~\ref{sec:experiments}.  We discuss related work in Section~\ref{sec:related} and conclude in Section~\ref{sec:outlook}.

\section{Preliminaries}
\label{sec:prelim}

When introducing functional dependencies in Section~\ref{ssec:fds}, we deviate from the common definition (cf.~\cite{DBLP:books/aw/AbiteboulHV95}) in that we take into account the order of tuples. This allows us to identify individual cells in a relation instance.
The notion of entropy-related information content originates from the work of Arenas and Libkin~\cite{DBLP:conf/pods/ArenasL03} (Section~\ref{ssec:entropy}).
As a first contribution, we present simplifications to compute the corresponding entropy values (Section~\ref{ssec:simply}).

\subsection{Functional Dependencies}\label{ssec:fds}

Denote by~$\N$ the set of positive integers, $\N := \{ 1, 2, \dots \}$.
\begin{definition}
    A \emph{relation}~$R$ of arity~$m$ is specified by a finite set $\sort(R) := \{ A_1, \dots, A_m \}$ of attributes~$A_i$.  The domain of an attribute~$A$ is denoted by $\Dom(A)$.

    An \emph{instance}~$I$ of a relation~$R$ is a partial map
    \[ I \colon \N \rightharpoonup \bigtimes_{A \in \sort(R)} \Dom(A) \]
    with finite domain of definition $\Def(I) := \{ j \in \N \mid
    I(j) \text{ is defined} \}$.
\end{definition}

The above definition of a relation instance allows for duplicate tuples and preserves the order of the tuples.

For simplification, in this work we assume that $\Dom(A) = \N$ for all $A \in \sort(R)$, i.e., each tuple consists of positive integers.  Thus an instance~$I$ of a relation~$R$ of arity~$m$ can be seen as just a partial map $I \colon \N \rightharpoonup \N^m$.

\begin{definition}
    Let~$R$ be a relation and~$I$ an instance of~$R$.  For a subset $K := \{ A_1, \dots, A_s \} \subseteq \sort(R)$ of attributes, we have the projection $\pi_K(I)$ of the tuples in~$I$ to the attributes in~$K$.  Moreover, for a subset $J \subseteq \Def(I)$ we have the restricted map $I|_J \colon J \to \N^m$ defined for all indices $j \in J$.
\end{definition}

Given an instance~$I$ of a relation~$R$ we denote by $t_j[A_k]$ the value of the attribute~$A_k$ in the $j$-th tuple $t_j := I(j)$ of~$I$.

Similarly, for a subset $K = \{ A_1, \dots, A_s \} \subseteq \sort(R)$ of attributes we denote by $t_j[A_1 \ldots A_s]$ the tuple of values $t_j[A_1], \dots, t_j[A_s]$, that is, the $j$-th tuple $\pi_K(I(j))$ of the projection $\pi_K(I)$.

\begin{definition}
    A \emph{functional dependency} in a relation~$R$ is a pair $f := (\{ A_1, \dots, A_s \}, B)$, where $A_1, \dots, A_s, B \in \sort(R)$ are attributes.  We write such a pair as $A_1 \ldots A_s \to B$.
    
    An instance~$I$ of a relation~$R$ is said to \emph{fulfill} the functional dependency~$f$ (write $I \models f$) if for all $j_1, j_2 \in \Def(I)$ it holds 
    \[ t_{j_1}[A_1 \ldots A_s] = t_{j_2}[A_1 \ldots A_s] ~\Rightarrow~ t_{j_1}[B] = t_{j_2}[B] . \]
    
    Moreover, the instance~$I$ \emph{fulfills} a set~$F$ of functional dependencies (write $I \models F$) if $I \models f$ for all $f \in F$.
\end{definition}

We also consider relation instances having unspecified values at some positions.  Let $\var$ be a (countably infinite) set of variables.

\begin{definition}
    Let~$I$ be an instance of a relation~$R$ with attributes $\sort(R) = \{ A_1, \dots, A_m \}$.  A \emph{position} in~$I$ is a pair $(j, A_k)$ with $j \in \Def(I)$ and $k \in \{ 1, \dots, m \}$; it represents the cell of attribute~$A_k$ in the $j$-th tuple, with value $t_j[A_k]$.  The instance obtained by putting the value~$v$ at position $p = (j, A_k)$ is denoted by $I_{p \from v}$.
    
    Let $Q = \{ q_1, \dots, q_k \}$ be a set of positions, $X = (x_1, \dots, x_k)$ distinct variables, and $V = (v_1, \dots, v_k)$ values in $\N$.  The instance obtained by replacing each position~$q_i$ by the variable~$x_i$ resp.\ by value~$v_i$ for $1 \le i \le k$ is denoted by $I_{Q \from X}$ resp.\ $I_{Q \from V}$ (hence, the instance obtained by replacing the positions from~$Q$ by variables, and then position~$p$ by the value~$v$, is $(I_{Q \from X})_{p \from v}$).
    
    An instance~$I$ containing distinct variables at positions $Q = \{ q_1, \dots, q_k \}$ \emph{fulfills} a set~$F$ of functional dependencies (write $I \models F$) if there exists a set of values $V = (v_1, \dots, v_k)$ such that $I_{Q \from V} \models F$.
    For one functional dependency $f \in F$ we write $I \models f$ if $I \models \{ f \}$.
\end{definition}

For $f := A_1 \ldots A_s \to B$ (and $I$ possibly containing variables) we readily obtain that $I \models f$ holds if and only if for all $j_1, j_2 \in \Def(I)$ such that $t_{j_1}[B] \notin \var$ and $t_{j_2}[B] \notin \var$ there holds
\[ t_{j_1}[A_1 \ldots A_s] = t_{j_2}[A_1 \ldots A_s] ~\Rightarrow~ t_{j_1}[B] = t_{j_2}[B] . \]
So the above definition requires a single functional dependency to be fulfilled only for tuples without variables.  Indeed, since all variables are distinct, it is always possible to set values in their positions so that the functional dependency is fulfilled.

We note that if~$I$ is an instance with variables and~$F$ is a set of functional dependencies, then $I \models F$ is in general not equivalent to $I \models f$ for all $f \in F$.  However, it can be shown that $I \models F$ if and only if $I \models f$ for all $f \in F^*$, where~$F^*$ is the transitive closure of~$F$.  This equivalence ensures the same semantics as in the original work by Arenas and Libkin~\cite{DBLP:conf/pods/ArenasL03}, and we assume that the transitive closure of functional dependencies is provided.

\subsection{Entropy and Information Content}\label{ssec:entropy}

In order to define the information content we need to recall some basic notions from information theory, cf.~\cite{CovTho}.
We deal with discrete probability spaces $\mcX = (X, P_{\mcX})$ on finite sets~$X$, where $P_{\mcX}(x)$ for $x \in X$ denotes the probability of the event $\{ x \}$.  The information-theoretic \emph{entropy} of the probability space~$\mcX$ is given by \[ H(\mcX) \,:=\, - \sum_{x \in X} P_{\mcX}(x) \log P_{\mcX}(x) . \]

If $\mcA = (A, P_{\mcA})$ and $\mcB = (B, P_{\mcB})$ are probability spaces on finite sets~$A$ and~$B$ with joint distribution $P_{\mcA \times \mcB}$, then the \emph{conditional probability} of $a \in A$ given $b \in B$ is defined by
\[ P(a | b) \,:=\, \frac{P_{\mcA \times \mcB}(a, b)} {P_{\mcB}(b)} \]
provided that $P_{\mcB}(b)$ is non-zero. 
Conversely, the conditional probabilities $P(a | b)$ with the probabilities $P_{\mcB}(b)$ determine the joint distribution $P_{\mcA \times \mcB}$ by the above formula.

\begin{definition}
    Let $\mcA = (A, P_{\mcA})$, $\mcB = (B, P_{\mcB})$ be probability spaces.
    The \emph{conditional entropy} of~$\mcA$ given~$\mcB$ is
    \[ H(\mcA \mid \mcB) \,:=\, - \sum_{b \in B} P_{\mcB}(b)
    \sum_{a \in A} P(a | b) \log P(a | b) . \]

    This value describes the remaining uncertainty in probability space~$\mcA$ given the outcome in~$\mcB$.    
    If the probability distributions of~$\mcA$ and~$\mcB$ are independent, then $P(a | b) = P_{\mcA}(a)$ for all $b \in B$, and thus we obtain $H(\mcA \mid \mcB) = - \sum_{a \in A} P_{\mcA}(a) \log P_{\mcA}(a) = H(\mcA)$.
\end{definition}

Consider now a relation instance~$I$ of arity~$m$ and denote by $\Pos := \Def(I) \times \{ A_1, \dots, A_m \}$ its set of positions.  Let~$F$ be a set of functional dependencies fulfilled by~$I$ and $p \in \Pos$ a position.  We define two probability spaces as follows.

First, let $\mcB(I, p) := (\mcP(\Pos \setminus \{ p \}), P_{\mcB})$, where~$P_{\mcB}$ is the uniform distribution on the set of all subsets of $\Pos \setminus \{ p \}$.
This space models the possible cases when we lose a set of possible values from the instance~$I$ on positions other than the considered one~$p$.

Then, for $k \in \N$ we let $\mcA_F^k(I, p) := ( \{ 1, \dots, k \}, P_{\mcA})$, where the conditional probability of $v \in \{ 1, \dots, k \}$ given $Q \subseteq \Pos \setminus \{ p \}$ is
\[ P(v | Q) \,:=\, \begin{cases} 1 / \# V_Q & \text{if } v \in V_Q , \\ 0 & \text{otherwise} , \end{cases} \]
with $V_Q := \{ v \in \{ 1, \dots, k \} \mid (I_{Q \from X})_{p \from v} \models F \}$.
This probability space models the possible values in $\{ 1, \dots, k \}$ to be put in at position~$p$ for which we lost the value, when~$Q$ is the set of positions of the other lost values in the instance.

\begin{definition}
    Let~$F$ be a set of functional dependencies for a relation~$R$ and let~$I$ be an instance of~$R$ with $I \models F$.  The \emph{information content} of position~$p$ with respect to~$F$ in instance~$I$ is given as
    \[ \INF_I(p \mid F) \,:=\, \lim_{k \to \infty} \frac{\INF_I^k(p \mid F)} {\log k} , \] where $\INF_I^k(p \mid F) := H(\mcA_F^k(I, p) \mid \mcB(I, p))$ is the conditional entropy of the probability space modeling the possible values for the considered position~$p$ given the space modeling the possible sets of other lost values in the instance.
\end{definition}

Unfortunately, when using the above formula for the conditional entropy $\INF_I^k(p \mid F)$ directly, the computation grows exponentially with the number of cells in the given instance.  Indeed, for each cell except the one at position~$p$ the value can be deleted or not, so every subset of $\Pos \setminus \{ p \}$ is an elementary event in the probability space $\mcB(I, p)$.  Therefore, each additional cell in the instance~$I$ doubles the number of events to be taken into account for the computation of the information content.

\subsection{Simplifications}\label{ssec:simply}

We now provide more compact and simplified (but still exponentially complex) formulas for the information content.  The first result easily follows from the definition of conditional entropy.

\begin{proposition}
    Let~$F$ be a set of functional dependencies and~$I$ an instance with $I \models F$.  Then the information content of a position~$p$ in~$I$ with respect to~$F$ is given by 
    \[ \INF_I(p \mid F) = \frac 1 {2^{\#\Pos - 1}} \sum_{\! Q \subseteq \Pos \setminus \{ p \} \!} \lim_{k \to \infty} \frac{\log \# V_Q} {\log k} , \]
    where $V_Q := \{ v \in \{ 1, \dots, k \} \mid (I_{Q \from X})_{p \from v} \models F \}$.
\end{proposition}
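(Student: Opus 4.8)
The plan is to unwind the definition $\INF_I^k(p \mid F) = H(\mcA_F^k(I,p) \mid \mcB(I,p))$ into an explicit closed form in the cardinalities $\#V_Q$, and only afterwards divide by $\log k$ and pass to the limit. First I would substitute the two probability spaces into the conditional-entropy formula $H(\mcA \mid \mcB) = -\sum_{b} P_{\mcB}(b)\sum_a P(a \mid b)\log P(a \mid b)$, taking $B = \mcP(\Pos \setminus \{p\})$ and $A = \{1,\dots,k\}$. Since $P_{\mcB}$ is the uniform distribution on the $2^{\#\Pos - 1}$ subsets of $\Pos \setminus \{p\}$, the outer weight $P_{\mcB}(Q)$ is the constant $2^{-(\#\Pos - 1)}$, which can be pulled out of the sum over~$Q$.

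The key step is to evaluate the inner sum $\sum_{v=1}^{k} P(v \mid Q)\log P(v \mid Q)$ for fixed~$Q$. By definition $P(\cdot \mid Q)$ is the uniform distribution on~$V_Q$: it equals $1/\#V_Q$ on the $\#V_Q$ values of~$V_Q$ and vanishes elsewhere. Using the convention $0\log 0 = 0$, only the terms with $v \in V_Q$ survive, each contributing $(1/\#V_Q)\log(1/\#V_Q)$, so their sum is exactly $-\log\#V_Q$. Here I would record that $V_Q$ is nonempty for all large~$k$: since $p \notin Q$ and $I \models F$ forces $I_{Q \from X} \models F$ (turning cells into distinct variables only exempts tuples and thus relaxes each dependency), the original value $t_p$ lies in $V_Q$ as soon as $k \ge t_p$, making $\log\#V_Q$ finite. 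Combining this with the constant outer weight yields $\INF_I^k(p \mid F) = 2^{-(\#\Pos - 1)} \sum_{Q \subseteq \Pos \setminus \{p\}} \log\#V_Q$.

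Finally I would divide by $\log k$ and let $k \to \infty$. Because $\Pos$ is finite, the index set $\{\, Q : Q \subseteq \Pos \setminus \{p\} \,\}$ is finite, so the limit of the sum is the sum of the limits, giving the claimed formula. The one point needing care is that this interchange presupposes that each individual limit $\lim_{k \to \infty} \frac{\log\#V_Q}{\log k}$ exists: the bound $1 \le \#V_Q \le k$ confines every ratio to $[0,1]$, and one checks that for fixed~$Q$ the dependencies either pin the value at~$p$ to a fixed finite set or forbid only a fixed finite set of values, so $\#V_Q$ is eventually constant or of the form $k$ minus a constant and every such limit exists (equalling $0$ or~$1$). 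I expect this existence bookkeeping, together with the $0\log 0$ convention and the nonemptiness of~$V_Q$, to be the only subtlety; the rest is the routine substitution the text already flags as following easily from the definition of conditional entropy.
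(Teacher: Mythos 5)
Your proposal is correct and follows essentially the same route as the paper: substitute the two probability spaces into the conditional-entropy formula, evaluate the inner sum as $-\log \# V_Q$ using uniformity of $P(\cdot\mid Q)$ on $V_Q$, pull out the constant weight $2^{-(\#\Pos-1)}$, and exchange the limit with the finite sum over~$Q$. The extra bookkeeping you flag (nonemptiness of $V_Q$, the $0\log 0$ convention, and existence of each limit $\lim_k \log\#V_Q/\log k$) is glossed over in the paper's proof, with the existence and $\{0,1\}$-valuedness of those limits deferred to the lemma that immediately follows.
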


\begin{proof}
    By the definition of conditional entropy we obtain
    \begin{align*}
        \INF_I^k (p \mid F) &= - \sum_{\! Q \subseteq \Pos\setminus\{p\} \!} P_{\mcB}(Q) \sum_{v=1}^k P(v|Q) \log P(v|Q) \\
        &= - \sum_{\! Q \subseteq \Pos\setminus\{p\} \!} \frac 1 {\#\mcP(\Pos\setminus\{p\})} \sum_{v \in V_Q} \frac 1 {\# V_Q} \log \frac 1 {\# V_Q} \\
        &= \frac 1 {2^{\#\Pos-1}} \sum_{\! Q \subseteq \Pos\setminus\{p\} \!} \log \# V_Q .
    \end{align*}
    Normalizing and computing the limit shows the assumption.
\end{proof}

The values $\INF_I(p \mid F, Q) := \lim_{k \to \infty} \frac{\log \# V_Q} {\log k}$ can be seen as the information content of~$p$ in~$I$ with respect to~$F$ given a fixed subset $Q \subseteq \Pos \setminus \{ p \}$ to be substituted by variables.
The next result shows that for these, there are only two possible outcomes.

\begin{lemma}
    Let~$F$, $I$ and~$p$ be as above.  For any $Q \subseteq \Pos \setminus \{ p \}$ we have $\INF_I(p \mid F, Q) \in \{ 0, 1 \}$.
\end{lemma}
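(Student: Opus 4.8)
The plan is to show that the quantity $\# V_Q$, viewed as a function of $k$, can behave in only one of two ways: either it stays bounded by a constant independent of $k$, or it equals $k$ minus a constant. In the first case $\log \# V_Q / \log k \to 0$, and in the second $\log(k - O(1)) / \log k \to 1$, so the limit is forced into $\{ 0, 1 \}$ with no intermediate value possible. The entire argument reduces to understanding, for a fixed~$Q$, exactly which values $v \in \{ 1, \dots, k \}$ keep $(I_{Q \from X})_{p \from v} \models F$.

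First I would fix the position $p = (j, A_c)$ and analyze each functional dependency $f : L \to B$ in~$F$ separately, according to the role played by the attribute~$A_c$. If $A_c \notin L \cup \{ B \}$, then the value at~$p$ is irrelevant to~$f$, and since $I \models f$ implies $I_{Q \from X} \models f$ (replacing cells by distinct variables only removes pairs from the condition), this~$f$ imposes no constraint on~$v$. The two interesting cases are $A_c = B$ (position~$p$ on the right-hand side) and $A_c \in L$ (position~$p$ on the left-hand side); in each I would further split on whether the relevant cells of row~$j$ are concrete values or variables, using crucially that all variables are distinct, so a tuple carrying a variable on~$L$ can only match itself on~$L$.

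The key computation is the following. When $A_c \in L$ and both the remaining left-hand cells of row~$j$ and the cell $(j, B)$ are concrete, the bad values of~$v$ are exactly the entries $t_{j'}[A_c]$ for rows~$j'$ that agree with~$j$ on $L \setminus \{ A_c \}$, have a concrete $B$-value, and disagree with $t_j[B]$; this is a finite set whose size is bounded by the number of tuples and is therefore independent of~$k$. When $A_c = B$ and row~$j$ has an all-concrete left-hand side matched by some other concrete-$B$ row~$j'$, the dependency instead forces $v = t_{j'}[B]$ to a single value. Every remaining configuration imposes no constraint at all. I would then intersect these per-dependency constraints: if some dependency forces~$v$, the forced value must be the original entry $w_0 := t_j[A_c]$ (because $I \models F$ guarantees the matching rows already carried this value), so $V_Q = \{ w_0 \}$ and $\# V_Q = 1$, giving limit~$0$; otherwise only finitely many values are forbidden in total, so $\# V_Q = k - O(1)$ and the limit is~$1$.

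The main obstacle will be the careful bookkeeping of the variables. I must verify that $w_0$ is never excluded by a forbidding constraint (which holds precisely because $I \models F$, so restoring the original value cannot create a conflict), that two distinct forcing dependencies cannot force two different values (each forces the same $w_0$), and that replacing a left-hand cell of row~$j$ by a variable genuinely destroys the constraint rather than introducing a spurious match. Once these distinctness arguments are pinned down, the dichotomy \enquote{bounded versus cofinite} for $\# V_Q$ follows, and the two limiting values~$0$ and~$1$ drop out immediately.
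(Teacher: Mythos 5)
Your proof is correct, and it reaches the same underlying dichotomy as the paper --- $\# V_Q$ is either bounded independently of~$k$ or of the form $k - O(1)$ --- but by a genuinely different route. The paper's argument is a one-line symmetry observation: for any \emph{fresh} value~$a$ (one not occurring in the column of~$p$), whether $(I_{Q \from X})_{p \from a} \models F$ cannot depend on which fresh value was chosen, so either all $k - O(1)$ fresh values lie in~$V_Q$ or none do; this immediately forces the limit into $\{0,1\}$ without ever inspecting the structure of the dependencies. You instead carry out an explicit per-dependency constraint analysis, classifying each $f \in F$ by the role of the attribute of~$p$ and computing the forbidden or forced values of~$v$ directly. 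This is more laborious and requires the careful variable bookkeeping you flag (all of which checks out: distinctness of variables kills spurious matches, and $I \models F$ guarantees both that $w_0 = t_j[A_c]$ is never forbidden and that any forcing dependency forces exactly~$w_0$), but it also buys strictly more: you identify $V_Q$ exactly as either $\{ w_0 \}$ or a cofinite set, which is sharper than the lemma demands and is essentially the content the paper extracts afterwards in Proposition~\ref{prop:simply}. The only loose end is the degenerate case of a dependency $A_1 \ldots A_s \to B$ with $B \in \{A_1, \dots, A_s\}$, where your two cases overlap and the ``forcing'' reading is too strong (such a dependency is trivially satisfied and constrains nothing); this does not affect the lemma's conclusion, since the limit still lands in $\{0,1\}$, but it would matter if you used your exact description of $V_Q$ downstream.
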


\begin{proof}
    Consider any \enquote{fresh} value $a \in \N$, which does not appear in the column of position~$p$ in the relation instance~$I$.  It is easy to see that, whether the instance $(I_{Q \from X})_{p \from a}$ fulfills~$F$ or not, does not depend on the choice of those values~$a$.  Therefore,  as $k \to \infty$ we either have $\log \# V_Q / \log k \to 1$ or $\log \# V_Q / \log k \to 0$.
\end{proof}

From this lemma and its proof, we deduce the following simplification for computing information content.

\begin{proposition}\label{prop:simply}
    Let~$F$ be a set of functional dependencies and~$I$ an instance with $I \models F$.  Let~$p$ be a position in~$I$ with attribute~$A$, and let $a \in \N$ be any value that does not appear in the column of attribute~$A$.  This yields:
    \[ \INF_I(p \mid F) = \frac{ \# \{ Q \subseteq \Pos \setminus \{ p \} \mid (I_{Q \from X})_{p \from a} \models F \} } { 2^{\# \Pos - 1} } \]
\end{proposition}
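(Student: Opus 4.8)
The plan is to combine the two preceding results. The Proposition just proved gives
\[ \INF_I(p \mid F) = \frac 1 {2^{\#\Pos - 1}} \sum_{Q \subseteq \Pos \setminus \{ p \}} \INF_I(p \mid F, Q), \]
where each summand $\INF_I(p \mid F, Q) = \lim_{k\to\infty} \log\#V_Q / \log k$. By the Lemma, every such summand lies in $\{0,1\}$, so the sum is simply a count: it equals the number of subsets $Q$ for which $\INF_I(p \mid F, Q) = 1$. The whole task therefore reduces to identifying, in a computable way, exactly which $Q$ contribute a~$1$.

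The key step is to characterize when $\INF_I(p \mid F, Q) = 1$. First I would recall from the proof of the Lemma that whether $(I_{Q\from X})_{p\from a} \models F$ holds does not depend on which fresh value $a$ (not appearing in the column of~$A$) is chosen; this is what makes the final formula well-defined for an arbitrary such~$a$. Then I would argue that $\INF_I(p \mid F, Q) = 1$ precisely when the instance remains consistent after placing such a fresh value at~$p$, i.e.\ when $(I_{Q\from X})_{p\from a} \models F$. The intuition is that $V_Q$ is the set of values in $\{1,\dots,k\}$ that keep the FDs satisfied; if a fresh value works then by the functional dependency constraints essentially all of the $k$ values outside a bounded set work, giving $\#V_Q \sim k$ and hence limit~$1$; whereas if a fresh value fails, the surviving values are confined to those already forced to appear, a number bounded independently of~$k$, giving $\log\#V_Q/\log k \to 0$.

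Making the last claim precise is the main obstacle. I would split into the two cases. If the fresh value $a$ yields a consistent instance, then since all values not appearing in the column behave identically (by the Lemma's argument), at least $k - c$ of the values $v \in \{1,\dots,k\}$ lie in $V_Q$, where $c$ bounds how many values already occur in that column; thus $\#V_Q \ge k - c$ and $\log\#V_Q/\log k \to 1$. Conversely, if the fresh value fails to satisfy~$F$, then no value outside the finitely many already-present ones can satisfy~$F$ either, so $V_Q$ is contained in that finite set and $\#V_Q \le c$ is bounded; hence $\log\#V_Q/\log k \to 0$. This dichotomy is exactly the $\{0,1\}$ split of the Lemma, now pinned to the explicit consistency test.

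Combining, the indicator $\INF_I(p \mid F, Q)$ equals $1$ iff $(I_{Q\from X})_{p\from a} \models F$, so the sum counts precisely the subsets $Q$ passing this test, yielding
\[ \INF_I(p \mid F) = \frac{\#\{ Q \subseteq \Pos \setminus \{ p \} \mid (I_{Q \from X})_{p \from a} \models F \}}{2^{\#\Pos - 1}}, \]
as claimed. I expect the only genuinely delicate point to be justifying the independence-of-$a$ and the bounded-versus-cofinite count of $V_Q$ cleanly; the rest is bookkeeping on top of the Proposition and Lemma already established.
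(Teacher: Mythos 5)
Your proposal is correct and follows essentially the same route as the paper: apply the earlier proposition to write $\INF_I(p \mid F)$ as a normalized sum of the indicators $\INF_I(p \mid F, Q)$, then use the lemma's fresh-value argument to identify the subsets contributing a~$1$ with those $Q$ satisfying $(I_{Q \from X})_{p \from a} \models F$. You actually spell out the cofinite-versus-bounded count of $V_Q$ in more detail than the paper, which simply cites the proof of the lemma for that dichotomy.
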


\begin{proof}
    Define $\mcP_a := \{ Q \subseteq \Pos \setminus \{ p \} \mid (I_{Q \from X})_{p \from a} \models F\}$.
    From the proof of the previous lemma we have that $\INF_I(p \mid F,Q) = 1$ if and only if $Q \in \mcP_a$.
    Thus we obtain \[
        \INF_I(p \mid F) = \frac 1 {2^{\#\Pos-1}} \sum_{\! Q \subseteq \Pos \setminus \{ p \} \!} \INF_I(p \mid F,Q) = \frac {\# \mcP_a} {2^{\#\Pos-1}} . \qedhere \]
\end{proof}

Although we have somewhat simplified the computation of information content, one still needs to consider all subsets of $\Pos \setminus \{ p \}$, leading to exponential complexity.  In the following, we therefore present several optimizations for this computation.

\section{Optimizations}
\label{sec:opt}

In this section, we provide optimizations to speed up the computation of information content.  In Section~\ref{ssec:size} we deal with exact methods and prove their correctness, while in Section~\ref{ssec:mc} we present an approximation.

\subsection{Reducing the Problem Size}\label{ssec:size}

We give two shortcuts for computing the information content in a relation instance.  The first identifies those positions where there is not any redundancy, so that the information content equals~$1$.

\begin{definition}
    Let $p = (j, B) \in \Pos$ be a position with attribute~$B$ in an instance~$I$ and let~$f$ be a functional dependency $A_1 \dots A_s \to B$ with $I \models f$.  We say that the value at~$p$ is \emph{unique} with respect to~$f$ if for every $j' \in \Def(I)$ there holds
    \[ t_j[A_1 \ldots A_s] = t_{j'}[A_1 \ldots A_s] ~\Rightarrow~ j = j' \]
    (so if $j \ne j'$, then $t_j[A_1 \ldots A_s] \ne t_{j'}[A_1 \ldots A_s]$).
    
    For a set~$F$ of functional dependencies with $I \models F$, the value at~$p$ is \emph{unique} with respect to~$F$ if it is unique with respect to all $f \in F$ of the form $A_1 \ldots A_s \to B$.
\end{definition}

Note that in particular, a value at position~$p$ with attribute~$B$ is unique with respect to~$F$ in case the attribute~$B$ does not appear on the right-hand side of any functional dependency in~$F$.

\begin{proposition}\label{prop:ones}
    Let $p \in \Pos$ be a position in an instance~$I$, where $I \models F$ for a set of functional dependencies~$F$.  Then $\INF_I(p \mid F) = 1$ if and only if the value at~$p$ is unique with respect to~$F$.
\end{proposition}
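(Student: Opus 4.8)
The plan is to use Proposition~\ref{prop:simply}, which reduces the claim $\INF_I(p \mid F) = 1$ to showing that \emph{every} subset $Q \subseteq \Pos \setminus \{ p \}$ belongs to $\mcP_a$, i.e.\ that $(I_{Q \from X})_{p \from a} \models F$ for a fresh value~$a$ in the column of~$p$. Since the fraction in Proposition~\ref{prop:simply} is $\#\mcP_a / 2^{\#\Pos - 1}$ and there are exactly $2^{\#\Pos - 1}$ subsets $Q$, the information content equals~$1$ precisely when $\mcP_a = \mcP(\Pos \setminus \{p\})$, and otherwise it is strictly less than~$1$. So the proposition is equivalent to the statement: the value at~$p$ is unique with respect to~$F$ if and only if $(I_{Q \from X})_{p \from a} \models F$ for all $Q$.

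First I would prove the ``if'' direction (uniqueness implies $\INF = 1$) by contraposition or directly. Write $p = (j, B)$. Take an arbitrary $Q$ and consider the modified instance $(I_{Q \from X})_{p \from a}$; I must check $I' \models f$ for every $f \in F$. For dependencies $f$ whose right-hand side is not~$B$, putting the fresh value $a$ at position~$p$ does not touch any $B'$-column for $B' \ne B$, and replacing positions in $Q$ by distinct fresh variables can only \emph{relax} the dependency (by the variable-fulfillment convention, tuples with variables on the right-hand side impose no constraint), so such $f$ remain satisfied since $I \models f$. The interesting dependencies are those of the form $A_1 \ldots A_s \to B$. Here I would use uniqueness: because the value at $p$ is unique with respect to~$f$, for every $j' \ne j$ we have $t_j[A_1 \ldots A_s] \ne t_{j'}[A_1 \ldots A_s]$ in~$I$. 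After substitution, the $B$-value of tuple~$j$ is the fresh value~$a$; any potential violation of $f$ would require another tuple $j'$ agreeing with tuple~$j$ on $A_1 \ldots A_s$ but disagreeing on~$B$. By uniqueness no such $j'$ exists (agreement on the left forces $j = j'$), so $f$ is not violated through position~$p$; and violations not involving~$p$ are excluded because $I \models f$ and the remaining values are either unchanged or turned into distinct variables. Hence $I' \models F$, so every $Q \in \mcP_a$ and $\INF_I(p \mid F) = 1$.

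For the ``only if'' direction, I would argue the contrapositive: if the value at~$p$ is \emph{not} unique with respect to~$F$, then $\INF_I(p \mid F) < 1$, i.e.\ some $Q \notin \mcP_a$. Non-uniqueness means there is a dependency $f = A_1 \ldots A_s \to B \in F$ and an index $j' \ne j$ with $t_j[A_1 \ldots A_s] = t_{j'}[A_1 \ldots A_s]$. Since $I \models f$, this forces $t_j[B] = t_{j'}[B]$ in the original instance. Now I would exhibit a specific bad~$Q$: take $Q = \emptyset$ (leaving all other cells intact). Then $(I_{\emptyset \from X})_{p \from a}$ sets $t_j[B] = a$ while $t_{j'}[B]$ retains its original (non-variable) value, yet tuples~$j$ and~$j'$ still agree on $A_1 \ldots A_s$ (the left-hand-side cells of~$j'$ are untouched, and those of~$j$ are unchanged unless some $A_i = B$). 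Because $a$ is chosen fresh, $a \ne t_{j'}[B]$, so $f$ is violated and $Q = \emptyset \notin \mcP_a$, giving $\#\mcP_a < 2^{\#\Pos - 1}$ and $\INF_I(p \mid F) < 1$.

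The main obstacle I anticipate is a boundary case in the ``only if'' argument: if some attribute $A_i$ on the left-hand side of the witnessing dependency coincides with~$B$ (so that changing $t_j[B]$ to~$a$ also changes the left-hand-side value of tuple~$j$), then tuples $j$ and $j'$ may no longer agree on $A_1 \ldots A_s$ after substitution, and the chosen $Q = \emptyset$ need not witness a violation. I would handle this by noting that such a dependency $A_1 \ldots A_s \to B$ with $B \in \{A_1, \ldots, A_s\}$ is trivial and can be discarded, or by choosing instead a $Q$ that deletes the offending left-hand-side cells of tuple~$j'$ rather than relying on $Q = \emptyset$; care is also needed that the convention ``$I$ with variables fulfills~$f$ only for variable-free tuples'' is applied correctly so that the substituted variables genuinely suppress the constraints we want suppressed while preserving the one violation we need. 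This bookkeeping around which cells carry variables versus the fresh value~$a$ is the only delicate point; the rest follows directly from Proposition~\ref{prop:simply} and the definition of uniqueness.
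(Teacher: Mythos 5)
Your proposal follows essentially the same route as the paper's proof: both directions go through Proposition~\ref{prop:simply}, the forward direction via the same case split on whether the right-hand side of a dependency equals the attribute of~$p$, and the converse via the witness $Q = \varnothing$ together with the freshness of~$a$. The core argument is correct.

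One remark on the boundary case you flag at the end. Your worry is legitimate, and in fact the paper's own proof of the \enquote{$\Rightarrow$} direction silently assumes it away: if the witnessing dependency $A_1 \ldots A_s \to B$ has $B \in \{A_1, \dots, A_s\}$, then inserting the fresh value~$a$ at~$p$ destroys the agreement on the left-hand side, so $Q = \varnothing$ need not witness a violation. Of your two proposed fixes, only the first actually works: a dependency with~$B$ among its own left-hand-side attributes is satisfied by \emph{every} instance (agreement on the left-hand side already forces agreement on~$B$, and the variable convention only weakens the requirement further), so no choice of~$Q$ can produce a violation; in particular, replacing left-hand-side cells of the other tuple by variables merely falsifies the hypothesis of the implication. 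The correct resolution is to treat such dependencies as trivial and exclude them both from~$F$ and from the quantification in the definition of uniqueness --- otherwise the proposition is literally false (two identical tuples under $F = \{AB \to B\}$ give a non-unique value with information content~$1$). With that implicit convention, both your argument and the paper's go through.
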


\begin{proof}
    Let $p = (j, B) \in \Pos$ and let~$a$ be a \enquote{fresh} value not appearing in the column of attribute~$B$ in instance~$I$.

    \enquote{$\Leftarrow$}.
    By Prop.~\ref{prop:simply} it suffices to show that for every subset\break $Q \subseteq \Pos \setminus \{ p \}$ it holds that $(I_{Q \from X})_{p \from a} \models F$.  So let $f \in F$ be a functional dependency $A_1 \ldots A_s \to B'$.
    From $I \models f$ we know that $I_{Q \from X} \models f$.  Then we have to show that 
    \[ (I_{Q \from X})_{p \from a} \models f , \]
    i.e., if $t_{j_1}[A_1 \ldots A_s] = t_{j_2}[A_1 \ldots A_s]$ for some $j_1, j_2 \in \Def(I)$ it still holds that $t_{j_1}[B'] = t_{j_2}[B']$, even after inserting value~$a$ at position~$p$.
    
    Suppose first that $B \ne B'$.  If $B \notin \{ A_1, \dots, A_s \}$ (and $B \ne B'$) the assertion above is clear for any value~$a$, since the statement is not affected.
    On the other hand, if $B = A_i$ for some $1 \le i \le s$, then by inserting the fresh value~$a$ the hypothesis becomes false, so the statement remains valid.

    Now consider the case $B = B'$.  We may assume that one of $j_1, j_2$ equals the index~$j$ of position~$p$ and write~$j'$ for the other index.  Then if $t_j[A_1 \ldots A_s] = t_{j'}[A_1 \ldots A_s]$, it follows from the uniqueness property that $j = j'$.  So the above statement still holds after inserting the value~$a$, since the tuple indices coincide.

    \enquote{$\Rightarrow$}.
    Suppose the value at~$p$ is not unique with respect to~$F$.  Then there is a functional dependency $A_1 \dots A_s \to B$ in~$F$ such that $t_j[A_1 \ldots A_s] = t_{j'}[A_1 \ldots A_s]$ for an index~$j'$ distinct from the index~$j$ of position~$p$.
    However, considering $Q := \varnothing$ this functional dependency is not fulfilled anymore by the instance $(I_{Q \from X})_{p \from a} = I_{p \from a}$.  Indeed, as the value~$a$ does not appear in the column of attribute~$B$, we have $t_j[B] \ne t_{j'}[B]$ after the insertion.
    This shows that $\INF_I(p \mid F) < 1$ by Prop.~\ref{prop:simply}.    
\end{proof}

In the second optimization, we reduce the considered instance to the relevant tuples and attributes.  This may reduce the number of cells, and thus decrease the runtime exponentially.
After performing this step, we can apply the first shortcut in the smaller table and use the outcome for the original instance.

Let~$I$ be a relation instance of arity~$m$ and let $J \subseteq \Def(I)$,\break $K \subseteq \{ A_1, \dots, A_m \}$ where the $A_k$ are the attributes of the relation.  The subinstance $I(J, K)$ consists of all tuples of the projection $\pi_K(I)$ with index $j \in J$.
We also let $\Pos(J, K) := J \times K \subseteq \Pos$ be the corresponding set of positions.

\begin{proposition}\label{prop:subtable}
    Let~$F$ be a set of functional dependencies and~$I$ an instance with $I \models F$.  Denote by~$J_0$ the set of all indices $j_0 \in \Def(I)$ such that for some position $p = (j_0, A_k) \in \Pos$ the value at~$p$ is not unique with respect to~$F$.  Let~$K_0$ the union of all attributes $\{ A_1, \dots, A_s, B \}$ involved in a functional dependency $A_1 \dots A_s \to B$ in~$F$.  Then for every $J \supseteq J_0$ and $K \supseteq K_0$ there holds
    \[ \forall p \in \Pos(J, K) \colon \INF_I(p \mid F) = \INF_{I(J, K)}(p \mid F) . \]
\end{proposition}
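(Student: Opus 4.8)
The plan is to reduce both sides to the closed-form count of Proposition~\ref{prop:simply} and then match the two counts through a fibered restriction map. Fix a position $p \in \Pos(J,K)$ and a value $a \in \N$ not occurring in the column of~$p$. Writing
\begin{align*}
\mcP_a &:= \{ Q \subseteq \Pos \setminus \{ p \} \mid (I_{Q \from X})_{p \from a} \models F \}, \\
\mcP_a' &:= \{ Q' \subseteq \Pos(J,K) \setminus \{ p \} \mid ((I(J,K))_{Q' \from X})_{p \from a} \models F \},
\end{align*}
Proposition~\ref{prop:simply} gives $\INF_I(p \mid F) = \# \mcP_a / 2^{\#\Pos - 1}$ and $\INF_{I(J,K)}(p \mid F) = \# \mcP_a' / 2^{\#\Pos(J,K) - 1}$. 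Since $\#\Pos - 1 = (\#\Pos(J,K) - 1) + \#(\Pos \setminus \Pos(J,K))$, it suffices to prove that $\#\mcP_a = 2^{\#(\Pos \setminus \Pos(J,K))} \cdot \#\mcP_a'$.

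The key lemma is that, for every $Q \subseteq \Pos \setminus \{ p \}$ with $Q' := Q \cap \Pos(J,K)$,
\[ (I_{Q \from X})_{p \from a} \models F \iff ((I(J,K))_{Q' \from X})_{p \from a} \models F . \]
Granting this, membership of~$Q$ in~$\mcP_a$ depends only on $Q' = Q \cap \Pos(J,K)$; hence $\mcP_a$ is the union of the fibers of the restriction map $Q \mapsto Q'$ over the elements of~$\mcP_a'$, and as each such fiber has exactly $2^{\#(\Pos \setminus \Pos(J,K))}$ elements (the positions outside $\Pos(J,K)$ being included in~$Q$ or not, freely), the required count follows and the proposition is proved.

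To prove the lemma I would observe that $((I(J,K))_{Q' \from X})_{p \from a}$ is exactly the restriction of $I^{*} := (I_{Q \from X})_{p \from a}$ to the rows~$J$ and columns~$K$, so I must show that neither the dropped columns nor the dropped rows influence satisfaction of~$F$. For columns this is immediate: as $K \supseteq K_0$, no attribute outside~$K$ appears in any dependency of~$F$, so every condition $t_{j_1}[A_1 \ldots A_s] = t_{j_2}[A_1 \ldots A_s] \Rightarrow t_{j_1}[B] = t_{j_2}[B]$ reads only cells in columns~$K$. One direction of the lemma then follows at once, since a violating pair of the subinstance is also a violating pair of~$I^{*}$. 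For the other direction I must rule out that~$I^{*}$ has a violating pair involving a dropped row $j \in \Def(I) \setminus J$; because $J \supseteq J_0$, such a row lies outside~$J_0$, so all its cells are unique with respect to~$F$.

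The main obstacle is this last step: showing a row $j \notin J_0$ can never sit in a violating pair of any $f : A_1 \ldots A_s \to B$ in~$I^{*}$. Uniqueness of the cell $(j,B)$ with respect to~$f$ gives $t_j[A_1 \ldots A_s] \ne t_{j'}[A_1 \ldots A_s]$ in~$I$ for every $j' \ne j$, and I would argue that this inequality survives in~$I^{*}$, which already excludes a violating pair through~$j$. The argument rests on the two freshness features of the construction: replacing any left-hand-side cell of row~$j$ or of~$j'$ by a variable only increases distinctness, as distinct positions carry distinct variables; and the single inserted value~$a$ cannot manufacture a new match, because $a$ is fresh in the column of~$p$ and the row of~$p$ lies in~$J$, hence differs from~$j$. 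A brief case distinction on whether the relevant left-hand-side cells are variables, untouched original values, or the inserted value~$a$ then shows the inequality persists in every case, completing the lemma and the proposition.
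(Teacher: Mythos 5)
Your proposal is correct and follows essentially the same route as the paper's own proof: reduce both sides to the counting formula of Prop.~\ref{prop:simply}, show that membership of $Q$ in $\mcP_a$ depends only on $Q \cap \Pos(J,K)$ via the uniqueness of all cells in dropped rows (and the fact that $K \supseteq K_0$ makes dropped columns irrelevant), and conclude by counting the fibers of size $2^{\#\Pos - \#\Pos(J,K)}$. If anything, your case distinction on variables, untouched values, and the fresh value $a$ makes explicit a step that the paper's proof leaves implicit.
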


\begin{proof}
    Write $I' := I(J, K)$ and $\Pos' := \Pos(J, K)$.  Let $p = (j, B) \in \Pos'$ be a position in the subtable and let~$a$ be a \enquote{fresh} value not appearing in the column of attribute~$B$ in the whole instance~$I$.
    Consider a functional dependency $f \colon A_1 \ldots A_s \to B$ in~$F$, as well as subsets $Q' \subseteq \Pos' \setminus \{ p \}$ and $Q \subseteq \Pos \setminus \{ p \}$ such that $Q \cap \Pos' = Q'$.  We claim that
    \[ (I'_{Q' \from X})_{p \from a} \models f \quad\Leftrightarrow\quad (I_{Q \from X})_{p \from a} \models f \,. \]
    Indeed, suppose that $t_j[A_1 \ldots A_s] = t_{j'}[A_1 \dots A_s]$ for some $j, j' \in \Def(I)$ in the instance $(I_{Q \from X})_{p \from a}$.  If both $j, j' \in J$ this implies $t_j[B] = t_{j'}[B]$ since $(I'_{Q' \from X})_{p \from a}$ fulfills~$f$.  In the other case, at least one index is not in~$J$, say~$j$.  By assumption, the value of the $j$-th tuple at attribute~$B$ is unique with respect to~$f$, hence it follows that $j = j'$ and so $t_j[B] = t_{j'}[B]$.  The converse direction is clear.

    Let $r := \# \Pos$ and $r' := \# \Pos'$.  Given $Q' \subseteq \Pos' \setminus \{ p \}$ one easily verifies the number of subsets $Q \subseteq \Pos \setminus \{ p \}$ with $Q \cap \Pos' = Q'$ to be $2^{r - r'}$ (one adjoins a subset in $\Pos \setminus \Pos'$).  Then letting
    \begin{gather*}
    \mcP_a := \{ Q \subseteq \Pos \setminus \{ p \} \mid (I_{Q \from X})_{p \from a} \models F \} \,, \\
    \mcP'_a := \{ Q' \subseteq \Pos' \setminus \{ p \} \mid (I'_{Q' \from X})_{p \from a} \models F \} \,,
    \end{gather*}
    we deduce from these observations that
    \[ 2^{r - r'} \# \mcP'_a = \# \mcP_a \,. \]
    Therefore, using Prop.~\ref{prop:simply} it follows that
    \[ \INF_I(p \mid F) = \frac {\# \mcP_a} {2^{r-1}} = \frac {2^{r-r'} \# \mcP'_a} {2^{r-1}} = \frac {\# \mcP'_a} {2^{r'-1}} = \INF_{I'}(p \mid F) . \qedhere \]
\end{proof}

\begin{example}
\label{example:fds}
    Consider the instance \smallskip
    \begin{center}  
      \begin{tabular}{cccc}\toprule
        A & B & C & D \\\midrule
        7 & 2 & 8 & 4 \\
        5 & 2 & 8 & 6 \\
        7 & 2 & 8 & 6 \\\bottomrule
      \end{tabular}
    \end{center} \smallskip
    and the set of functional dependencies $F := \{ \text A \to \text C \}$.

    Since the attributes~$\text A$,~$\text B$ or~$\text D$ do not appear on the right-hand side of the functional dependency, Prop.~\ref{prop:ones} implies that $\INF_I(p \mid F) = 1$ for all $p = (j, A_k)$ with $A_k \ne \text C$. 
    Additionally, by Prop.~\ref{prop:ones} we obtain that $\INF_I((2, \text C) \mid F) = 1$, since the value at position $p = (2, \text C)$ is unique with respect to~$F$.

    We can reduce the table using Prop.~\ref{prop:subtable} by removing the second tuple and the attributes~$\text B$ and~$\text D$.  The resulting subtable is \smallskip
    \begin{center}
      \begin{tabular}{cccc}\toprule
        A & C \\\midrule
        7 & 8 \\
        7 & 8 \\\bottomrule
      \end{tabular}
    \end{center} \smallskip
    for which the number of subsets in $\Pos(J, K) \setminus \{ p \}$ is reduced from $2^{15}$ to $2^3$, i.e., by a factor over 4,000.
    
    Consider position $p = (1, \text C)$ and the subsets $Q \subseteq \Pos(J, K) \setminus \{ p \}$.
    For $Q = \varnothing$, i.e., the values in all positions different from $(1, \text C)$ are present, the instance $(I_{Q \from X})_{p \from v} = I_{p \from v}$ fulfills the functional dependency $\text A \to \text C$ only if $v = 8$.  For all other subsets~$Q$, i.e., at least one more value is lost, we obtain $(I_{Q \from X})_{p \from v} \models F$ for all values~$v$.  Therefore, $\INF_I(p \mid F) = \frac 7 8 = 0.875$.  The computation of $\INF((3, \text C) \mid F)$ is similar, and we get 
    \[ \big( \INF_I((j, A_k) \mid F) \big) = \begin{pmatrix}
      1 & 1 & 0.875 & 1 \\
      1 & 1 & 1 & 1 \\
      1 & 1 & 0.875 & 1
    \end{pmatrix} . \]
\end{example}

\subsection{Monte Carlo Approximation}\label{ssec:mc}

Next, we present an approximation of the information content. 
Despite the aforementioned optimizations, the computation of information content may be practically out of reach, even in relation instances of moderate size.  This may be the case, in particular, if the prerequisites for applying the shortcuts are not given.  A workaround for these cases can probably be a good approximation.
This approach is complementary to the previous optimizations for computing exact values, e.g., by identifying cells with full information content first, then reducing the problem to a smaller subtable (if possible), and finally computing the values on this subtable with an approximation algorithm.

The approximation is computed with a randomized algorithm, the Monte Carlo method, as introduced in \cite{mitzenmacher2017}.  Instead of considering all subsets of positions (minus the considered one), we pick a sample of subsets uniformly at random and take the average of the results.  The following tool is deduced from Theorem~4.14 of \cite{mitzenmacher2017} and is useful for estimating the accuracy of this method.

\begin{proposition}[Hoeffding's inequality]
    Let $X_1, \dots, X_n$ be independent identically distributed real random variables such that $a_i \le X_i - \E [X_i] \le b_i$.  Then for any $c > 0$ it holds
    \[ \Pr \Big( \textstyle\sum\limits_{i=1}^n (X_i - \E [X_i]) \ge c \Big) \le \exp \Big( \dfrac {-2 c^2} {\sum_{i=1}^n (b_i - a_i)^2} \Big) . \]
\end{proposition}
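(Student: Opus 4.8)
The plan is to prove the bound by the exponential moment method (Chernoff's technique), reducing everything to a single one-dimensional estimate on the moment generating function of a bounded, centered random variable. Throughout I would work with the centered variables $Y_i := X_i - \E[X_i]$, which satisfy $\E[Y_i] = 0$ and $a_i \le Y_i \le b_i$, so that the claim becomes a tail bound on $S := \sum_{i=1}^n Y_i$.

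First I would apply Markov's inequality to the exponentiated sum: for any parameter $\lambda > 0$,
\[ \Pr(S \ge c) = \Pr\bigl(e^{\lambda S} \ge e^{\lambda c}\bigr) \le e^{-\lambda c}\,\E\bigl[e^{\lambda S}\bigr]. \]
Using that the $Y_i$ are independent, the expectation factorizes as $\E[e^{\lambda S}] = \prod_{i=1}^n \E[e^{\lambda Y_i}]$, so it remains to control each factor separately and then to choose $\lambda$ optimally.

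The heart of the argument is a uniform bound on the individual moment generating functions, known as Hoeffding's lemma: if $Y$ is centered with $a \le Y \le b$, then $\E[e^{\lambda Y}] \le \exp(\lambda^2 (b-a)^2 / 8)$. I would establish this by exploiting the convexity of $t \mapsto e^{\lambda t}$ on the interval $[a,b]$, which gives the pointwise bound $e^{\lambda Y} \le \frac{b - Y}{b - a} e^{\lambda a} + \frac{Y - a}{b - a} e^{\lambda b}$; taking expectations and using $\E[Y] = 0$ collapses the right-hand side to a function of $h := \lambda(b-a)$ of the form $e^{L(h)}$. The hard part will be the remaining analytic estimate: one checks $L(0) = L'(0) = 0$ and $L''(h) \le 1/4$ for all $h$, so that a second-order Taylor expansion yields $L(h) \le h^2 / 8$, which is precisely the claimed sub-Gaussian bound. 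This bound on the second derivative is the one genuinely non-trivial computation in the whole proof.

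Combining the factored Chernoff bound with Hoeffding's lemma gives
\[ \Pr(S \ge c) \le \exp\Bigl(-\lambda c + \tfrac{\lambda^2}{8} \textstyle\sum_{i=1}^n (b_i - a_i)^2\Bigr), \]
and it remains only to minimize the exponent over $\lambda > 0$. This exponent is a quadratic in $\lambda$ minimized at $\lambda = 4c / \sum_i (b_i - a_i)^2$, and substituting this value yields exactly the exponent $-2c^2 / \sum_i (b_i - a_i)^2$ claimed in the statement. As the paper already notes, this proposition can alternatively be quoted directly from Theorem~4.14 of \cite{mitzenmacher2017}; the sketch above merely records the self-contained derivation underlying it.
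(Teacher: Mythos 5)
Your derivation is correct: the Chernoff/exponential-moment reduction, the convexity bound leading to Hoeffding's lemma with the $L''(h) \le 1/4$ estimate, and the optimization at $\lambda = 4c / \sum_i (b_i - a_i)^2$ all check out and do yield the exponent $-2c^2/\sum_i (b_i - a_i)^2$. The comparison with the paper is slightly degenerate, however: the paper gives no proof of this proposition at all, but simply imports it as Theorem~4.14 of the cited textbook by Mitzenmacher and Upfal (it is labelled a \enquote{tool} that is \enquote{deduced from} that reference). So your sketch is not an alternative route so much as the standard self-contained derivation that the paper chooses to elide --- which you yourself acknowledge in your closing sentence. Two small remarks: your argument never uses the \enquote{identically distributed} hypothesis (and indeed the general Hoeffding inequality does not need it; the per-index bounds $a_i \le Y_i \le b_i$ already suggest the non-i.i.d.\ version, and only the later theorem in the paper specializes to $a_i = -1$, $b_i = 1$); and for completeness one should note the degenerate case $b = a$ in Hoeffding's lemma, where $Y$ is almost surely $0$ and the bound is trivial, so the division by $b - a$ in the convexity step is harmless.
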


To approximate the information content, we consider the probability space $\mcB(I, p) := (\mcP(\Pos \setminus \{ p \}), P_{\mcB})$ from Section~\ref{sec:prelim} with uniform distribution~$P_{\mcB}$.  Define random variables
\[ X_i \colon \mcP(\Pos \setminus \{ p \}) \to [0, 1 ] , \quad Q \mapsto \lim_{k \to \infty} \frac {\log \# V_Q} {\log k} , \]
with $V_Q := \{ v \in \{ 1, \dots, k \} \mid (I_{Q \from X})_{p \from v} \models F \}$.  Then by Prop.~\ref{prop:simply} there holds $\E [X_i] = \INF_I(p \mid F)$.  
This motivates the next theorem on the randomized approach to compute the information content with accuracy~$\eps$ and confidence $1 \!-\! \delta$.

\begin{theorem}
    Let $p \in \Pos$ be a position in a relation instance~$I$, where $I \models F$ for a set~$F$ of functional dependencies.  Let $X_1, \dots, X_n$ be independent identically distributed random variables as above and $X := \frac 1 n \sum_{i=1}^n X_i$ their average.  Then for all $\eps, \delta > 0$ it holds
    \[ \Pr \big( \vert X - \INF_I(p \mid F) \vert \ge \eps \big) \le \delta \]
    provided that $n \ge 2 \ln(2 / \delta) / \eps^2$.
\end{theorem}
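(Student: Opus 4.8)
The plan is to obtain the claimed two-sided concentration bound as a direct consequence of Hoeffding's inequality, applied to the centred sum $\sum_{i=1}^n (X_i - \E[X_i])$ in both directions. The two ingredients we need are already in place: each $X_i$ takes values in $[0,1]$ by construction, and the discussion preceding the theorem records that $\E[X_i] = \INF_I(p \mid F)$, so that $X - \INF_I(p \mid F) = \frac 1 n \sum_{i=1}^n (X_i - \E[X_i])$. Since $0 \le X_i \le 1$, the centred variable $X_i - \E[X_i]$ lies in $[a_i, b_i]$ with $a_i := -\INF_I(p \mid F)$ and $b_i := 1 - \INF_I(p \mid F)$, whence $b_i - a_i = 1$ for every $i$ and therefore $\sum_{i=1}^n (b_i - a_i)^2 = n$.

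First I would bound the upper tail. Applying Hoeffding's inequality with $c := n\eps$ gives
\[ \Pr\big( X - \INF_I(p \mid F) \ge \eps \big) = \Pr\Big( \textstyle\sum_{i=1}^n (X_i - \E[X_i]) \ge n\eps \Big) \le \exp\Big( \tfrac{-2 (n\eps)^2}{n} \Big) = \exp(-2 n \eps^2) . \]
For the lower tail I would apply the same inequality to the variables $-X_1, \dots, -X_n$, which again have range $1$ and expectation $-\INF_I(p \mid F)$; this yields $\Pr\big( X - \INF_I(p \mid F) \le -\eps \big) \le \exp(-2 n \eps^2)$. Adding the two tail probabilities (subadditivity) then gives
\[ \Pr\big( \vert X - \INF_I(p \mid F) \vert \ge \eps \big) \le 2 \exp(-2 n \eps^2) . \]

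It remains to choose $n$ so that the right-hand side is at most $\delta$. Solving $2\exp(-2 n \eps^2) \le \delta$ for $n$ yields $n \ge \ln(2/\delta) / (2\eps^2)$, which is comfortably implied by the stated hypothesis $n \ge 2\ln(2/\delta)/\eps^2$; hence the latter is a sufficient condition and the bound follows. I do not expect a genuine obstacle, as this is a textbook application of Hoeffding. The only points demanding care are the two-sided nature of the claim — handled by treating each tail separately and then combining them — and the correct identification of the range $b_i - a_i = 1$, on which the exponent $-2 n \eps^2$ and thus the final $n$-bound hinge. (Note that the argument in fact delivers the sharper threshold $\ln(2/\delta)/(2\eps^2)$, so the stated requirement is deliberately conservative.)
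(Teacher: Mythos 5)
Your proof is correct and follows the same route as the paper: Hoeffding's inequality applied to the centred sum, a union bound over the two tails, and then solving for $n$. The only difference is the choice of the interval $[a_i, b_i]$: the paper uses the cruder bound $-1 \le X_i - \E[X_i] \le 1$, so $\sum_i (b_i - a_i)^2 = 4n$ and the tail bound is $\exp(-n\eps^2/2)$, which makes $n \ge 2\ln(2/\delta)/\eps^2$ exactly the threshold needed; you instead take $a_i = -\E[X_i]$, $b_i = 1 - \E[X_i]$ with $b_i - a_i = 1$, obtaining $\exp(-2n\eps^2)$ and hence the sharper sufficient condition $n \ge \ln(2/\delta)/(2\eps^2)$, of which the stated hypothesis is a conservative (factor-$4$) relaxation. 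Both arguments are valid; yours additionally shows that the iteration counts quoted in the paper's examples could be reduced fourfold.
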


\begin{proof}
    Since $0 \le X_i \le 1$ we have $-1 \le X_i - \E [X_i] \le 1$.  Therefore, by applying Hoeffding's inequality with $\sum_{i=1}^n (b_i - a_i)^2 = 4 n$ we find for $\Pr(\sum_{i=1}^n (X_i - \E [X_i]) \ge n \eps)$ the upper bound $\exp(-n \eps^2 / 2)$.  Consequently, we have
    \[ \Pr(\vert X - \E [X_i] \vert \ge \eps) \le 2 \exp(-n \eps^2 / 2) \le \delta . \qedhere \]
\end{proof}

\begin{figure}[t]
  \includegraphics[width=\linewidth]{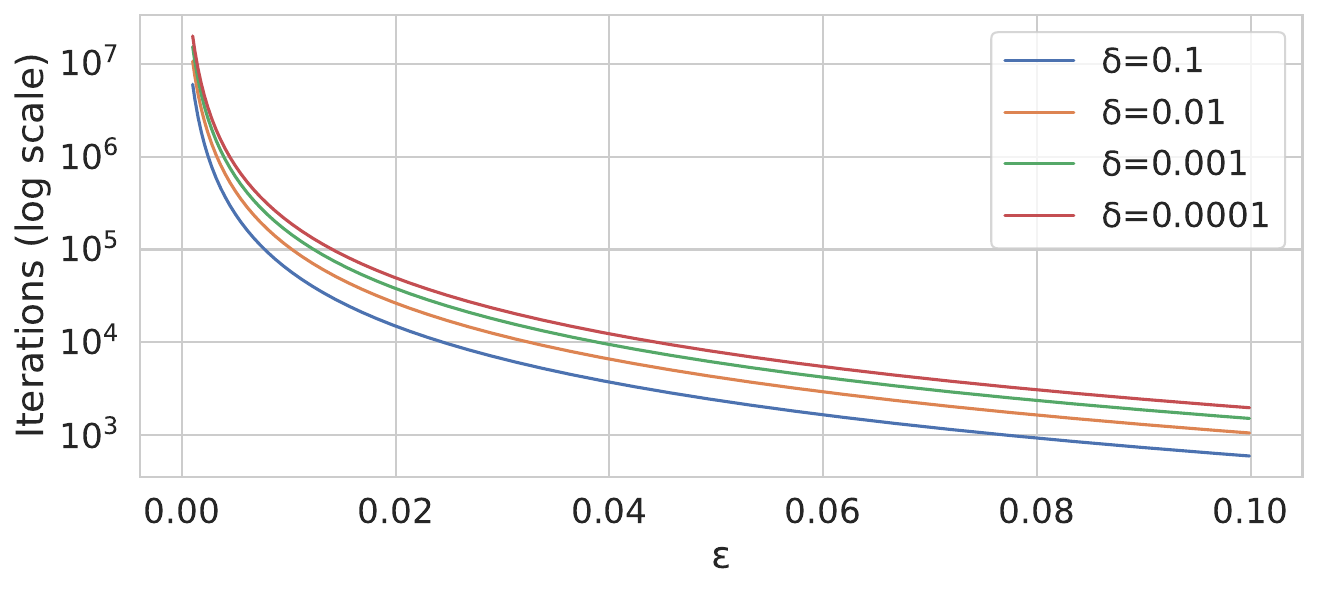}
  \caption{Required iterations to achieve an accuracy ($\eps$) with a certain confidence ($1 \!-\! \delta$) in Monte Carlo approximation.}
  \label{fig:accuracy}
\end{figure}

\begin{example}
    Assume that for the approximation of information content for an instance~$I$ at position~$p$ with respect to~$F$, we allow an error of at most~$0.001$ with probability at least $99.9 \%$.  This can be achieved by sampling $Q \subseteq \Pos \setminus \{ p \}$ and computing $X_i(Q)$ at least $2 \ln(2 / 10^{-3}) / (10^{-3})^2 \ge 1.52 \!\cdot\! 10^7$ times.
    
    If a less exact approximation is sufficient, say with an error of at most~$0.01$ with the same probability as before, then the required number of runs is lowered by a factor~$100$, hence only $1.52 \!\cdot\! 10^5$ samples are necessary.
\end{example}

Figure~\ref{fig:accuracy} shows a plot of the iterations required for reaching a certain accuracy~($\eps$) and a certain confidence~($1 \!-\! \delta$). 
For instance, to achieve an accuracy of 0.04 with a confidence of 99.9 \% ($1 \!-\! \delta$) we need around 10,000 iterations.

\section{Experiments}
\label{sec:experiments}

Our experiments target the following research questions:
\begin{description}

\item[RQ1] Is the plaque test useful for real-world datasets?

\item[RQ2] Can we afford to compute \emph{exact} entropy values, or do we need the Monte Carlo approximation?

\item[RQ3] How does the runtime of the Monte Carlo approximation scale with the number of iterations on real-world datasets?

\end{description}

In the following, we report on our insights with a first prototype and five real-world datasets. 

\subsubsection*{Implementation}
Our prototype implements our algorithms as a single-threaded Java implementation. As a dispatcher, we use a Python script that also measures end-to-end runtimes.

\subsubsection*{Datasets}
We explore five real-world datasets.
All functional dependencies are left-reduced with a single attribute on the right and were discovered by Metanome~\cite{DBLP:journals/pvldb/PapenbrockBFZN15}. 

Our first dataset describes natural satellites and originates from the WDC Web Table Corpus\footnote{\url{http://webdatacommons.org/webtables/index.html\#results-2015}}.
Metanome finds 35 functional dependencies, we analyze the first 150 rows.
We study four additional datasets\footnote{Retrieved from \url{https://hpi.de/naumann/projects/repeatability/data-profiling/fds.html}, on June 02, 2023}: The adult dataset (where we analyze the first 150 rows)  with 78 functional dependencies captures census data. The echocardiogram dataset (all 132 rows) with 538 functional dependencies describes heart attack patients.
The dataset NCVoter (first 150 rows) with 758 functional dependencies contains data about citizens and their voting behavior.
The iris dataset (first 150 rows) with 4 functional dependencies is about classification methods.

\subsubsection*{Environment}
Our server has an Intel Xeon Gold 6242R (3.1 GHz) CPU and 192GB of RAM, and runs Ubuntu~22.04 with Java~18. 

\newsavebox{\measurebox}
\begin{figure}[t]
\centering
\begin{tabular}[c]{cc}
\multicolumn{2}{c}{
\begin{subfigure}[t]{0.45\textwidth}
\includegraphics[width=\textwidth]{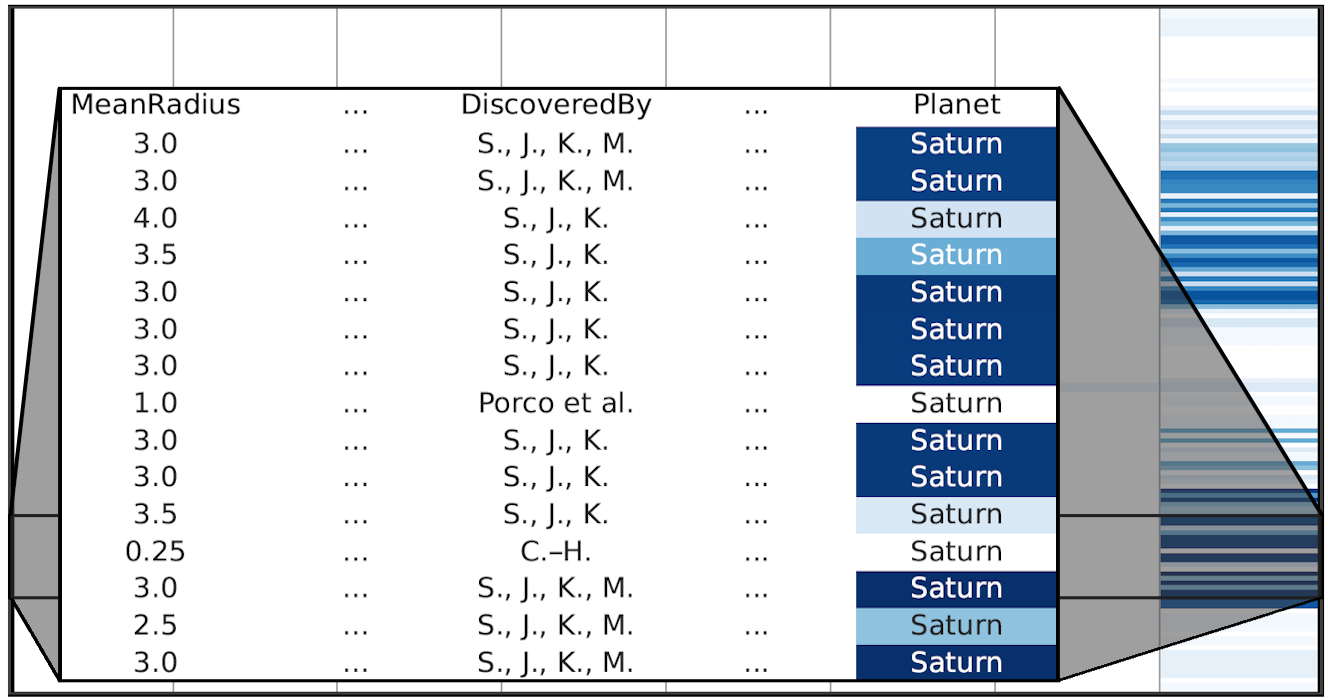}    
\caption{Satellites (150, Min: 0.61)}
\label{fig:plaque_satellite}
\end{subfigure}} \\
\begin{subfigure}[]{0.21\textwidth}
\includegraphics[width=\textwidth]{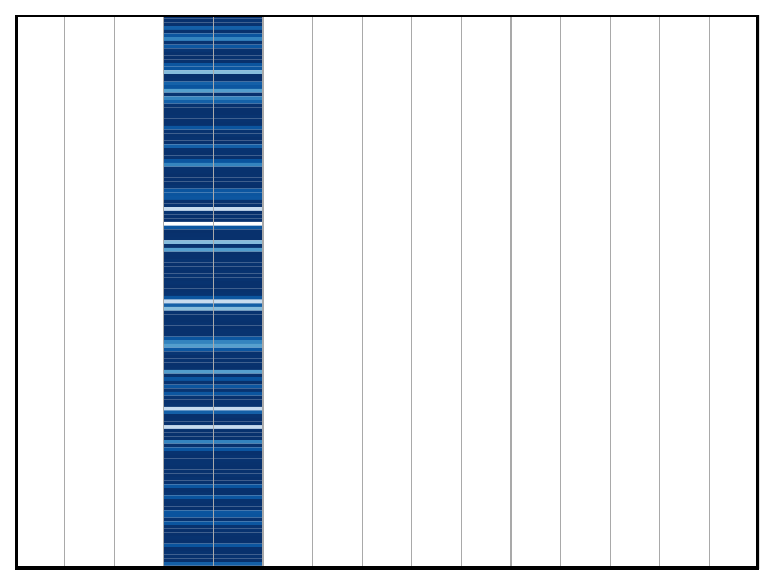}
\caption{Adult (150, Min: 0.50)} 
\label{fig:plaque_adult}
\end{subfigure}  
&
\begin{subfigure}[]{0.21\textwidth}
\includegraphics[width=\textwidth]{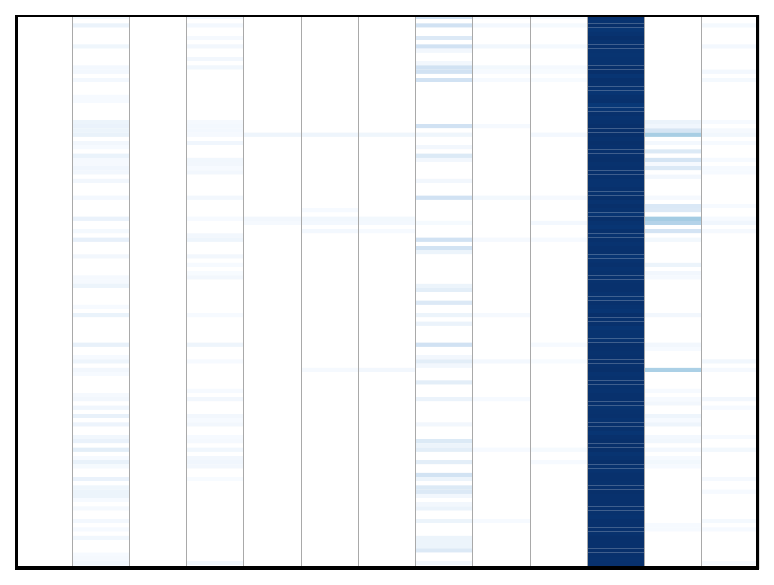}
\caption{Echoc. (132, Min: 0.00)} 
\label{fig:plaque_echo}
\end{subfigure} \\
\begin{subfigure}[]{0.21\textwidth}
\includegraphics[width=\textwidth]{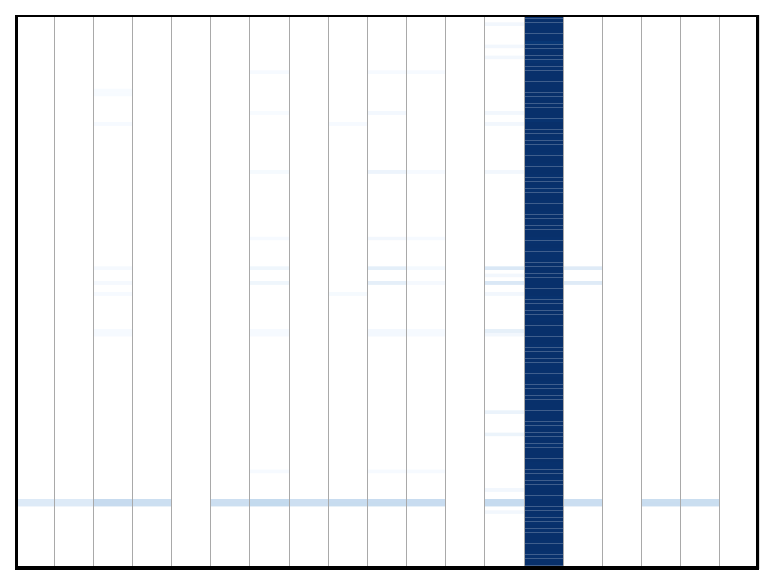}
\caption{NCVot. (150, Min: 0.00)}
\label{fig:plaque_voter}
\end{subfigure}
&
\begin{subfigure}[]{0.21\textwidth}
\includegraphics[width=\textwidth]{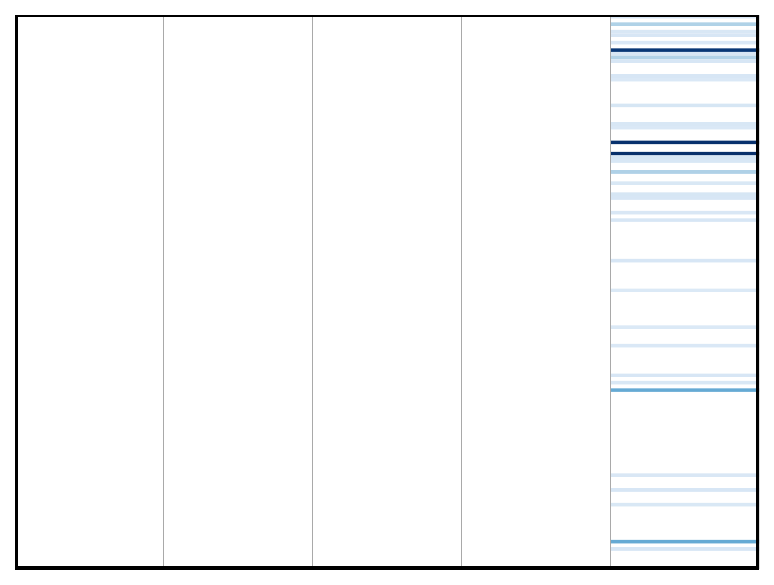}
\caption{Iris (150, Min: 0.95)}
\label{fig:plaque_iris}
\end{subfigure}
\end{tabular}
\caption{\enquote{Plaque tests} applied to real-world data. The sub-captions state the numbers of rows analyzed and minimum entropy values computed (rounded). The color scale is normalized individually with respect to the minimum entropy. The zoom-in in Subfigure~(a) highlights  a subset of rows.}
\label{fig:plaque-test}
\end{figure}

\medskip\noindent
We now address the research questions in turn.

\medskip\noindent{\sffamily\textbf{RQ1.}}
We computed the plaque tests for the five datasets and show the results in Figure~\ref{fig:plaque-test}. Specifically, 
the entropy values are computed with Monte Carlo simulation set to 100,000 iterations, thus accuracy of approx.\ 0.01 with a confidence of 99\%. 

Cells with an entropy value of~1 are shown in white, and cells with values below~1 are colored in blue, where darker shades indicate lower entropy values. The color scales are calibrated individually, so colors cannot be compared between datasets. 
We discuss each dataset in turn.

\paragraph{Satellites}

Figure~\ref{fig:plaque_satellite} shows the plaque test applied to the satellite dataset. Plaque is locally concentrated, as only the column~\enquote{Planet} and very few cells in column~\enquote{Notes} contain entropy values below~1.
We zoom in on a subset of the rows, omitting cells with an entropy value of~1 and showing the values for columns \enquote{MeanRadius}, \enquote{DiscoveredBy}, and \enquote{Planet}. 

First, we present a possible interpretation of the visual effect.
For tuples with a mean radius of 3.0, the entropy of the attribute planet is the lowest.
A mean radius of 3.0 occurs only for Saturn satellites. 
Closer inspection reveals that \enquote{MeanRadius} is on the left-hand side of several functional dependencies (as discovered by Metanome), including \enquote{$\text{MeanRadius, DiscoverdBy} \to \text{Planet}$}. 
By inspecting the causes for low entropy values in this fashion, data analysts can gain insight into why automated tools discover certain dependencies. They can then decide whether the dependency is genuine or merely an artifact of the data.

Next, we discuss the dataset w.r.t.\ our proposed optimizations.
In Figure~\ref{fig:plaque_satellite}, the majority of cells has entropy~1. Figure~\ref{fig:histogram} shows a histogram of the entropy values, to provide more insight into their distribution.
Among a total of 1,200 cells, approx.\ 90\% have an entropy value of~1. Values below~1 are scarce, with the lowest value being close to~0.6. Overall, the information content of the cells is rather high and only around 5\% of the cells have an entropy value below~0.9.

The high number of cells with entropy~1 shows that the optimizations introduced in Section~\ref{sec:opt} effectively reduce the problem size. In this exemplary dataset, 1,083 out of 1,200 cells have an information content of~1, thus the computation can be skipped for 90\% of the dataset by applying the first optimization introduced in Section~\ref{ssec:size}. This is a reduction by a factor~10. Since there are 35 rows with full information content, by applying the second optimization the number of rows in the dataset can be reduced by this number, thus by 280 cells. This reduces the number of computations by a factor of $2^{280}$, which is larger than $10^{84}$. Overall, the introduced optimizations reduce the computation effort for the satellites dataset by a factor over $10^{85}$.  \smallskip

\paragraph{Adult} 

Figure~\ref{fig:plaque_adult} shows the plaque test applied to the census data. Only two columns, \enquote{education} and \enquote{education-num} have entropy values below 1. Moreover, in each row, both columns have the \emph{same} entropy value.
Closer inspection reveals that there are functional dependencies \enquote{$\text{education-num} \to \text{education}$} as well as  \enquote{$\text{education} \to \text{education-num}$}. This causes the respective entropy values to agree. 

In consequence, a data analyst might decide to decompose this relation into the second normal form, 
by storing the mapping between
\enquote{education} and \enquote{education-num}
in a separate relation.

In fact, normalization theory was a main motivator behind the original work by Arenas and Libkin.

\paragraph{Echocardiogram and NCVoter}

Figure~\ref{fig:plaque_echo} shows the plaque test applied to the patient data,
Figure~\ref{fig:plaque_voter} the plaque test for voter data.
Among the five datasets, these datasets
have the highest number of columns with entropy values below~1: This affects 11 out of 13 columns in Echocardiogram and 15 out of 19 columns in NCVoter, but mostly only sparsely. 

One column in the echocardiogram dataset stands out, where all entropy values are zero (rounded to one digit after the decimal point).
Inspection reveals that this is the column that originally contained the patient's name, which was changed to a single global string constant as a means of anonymization. Consequently, for every attribute, there is an obvious functional dependency, with this attribute on the right-hand side.
Our plaque test correctly reveals that this column literally has almost no informational value.

In the dataset NCVoter, there is also a column with entropy values of zero.
The corresponding attribute is the state of the voting person. But since the data captures voters in only one state, namely North Carolina (NC), the column has a single-valued domain.
Therefore, as in the echocardiogram dataset, this column is functionally dependent on every other column, and the plaque test confirms that it does not have any information content.

\paragraph{Iris}

Figure~\ref{fig:plaque_iris} shows the plaque test applied to a classification data set.
In this dataset, there are only FDs with the last column, \enquote{class}, on the right side. This means that a class cannot uniquely define the value of any other attribute. Consequently, only this column has entropy values below 1 and is therefore a candidate for a redundant attribute.

\subsubsection*{Results}

We applied our visual plaque tests to standard datasets used in dependency discovery research. The plaque tests appear to be helpful in data exploration: When \enquote{plaque} is detected, we can always find an intuitive explanation for its causes. 

In the examples discussed, it highlighted a prominent functional dependency, revealed a good opportunity for schema normalization, and exposed data with nearly no informational value.

Moreover, the plaque test is very selective:
The test is strongly positive for only a few attributes.
This sparsity of cells that test strongly positive for plaque is particularly notable in the case of the echocardiogram dataset, despite the over 500 automatically discovered dependencies. Compared to this high number of dependencies, the result of the plaque test is easily consumable, and data analysts are visually directed towards the most pertinent redundancies.

\begin{figure}[t]
    \includegraphics[width=\linewidth]{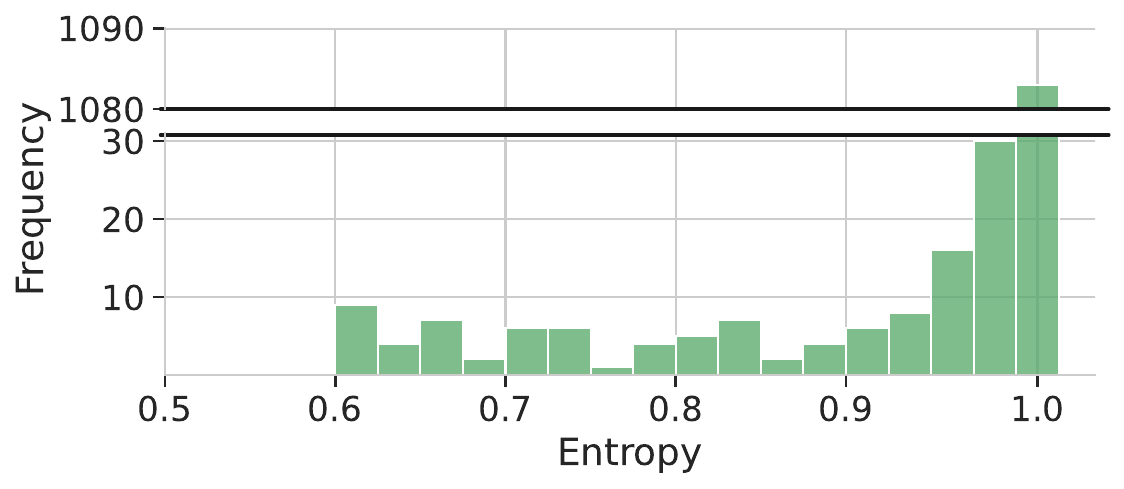}
    \caption{Histogram over entropy values in the first 150 rows of the satellites dataset (accuracy: 0.01, 99.9\% confidence).}
    \label{fig:histogram}
\end{figure}

\begin{table}
\setlength\extrarowheight{-2pt}
\caption{Runtimes in seconds for computing exact entropy values w/ and w/o the optimizations from Section~\ref{sec:opt} for the first~$i$ rows of satellite data. Runs marked \enquote{-} were aborted after 24 hours.}
\small
\centering
\begin{tabular}{rrr}
    \toprule
    \textbf{\#Rows} & \textbf{Unoptimized} & \textbf{Optimized} \\
    \midrule
    1 & 0.128 & 0.097 \\
    2 & 1.318 & 0.099 \\
    3 & 461.059 & 0.320 \\
    4 & - & 0.355 \\
    5 & - & 25,221.186 \\
    6 & - & - \\
    \bottomrule
\end{tabular}
\label{tab:runtime}
\end{table}

\medskip\noindent{\sffamily\textbf{RQ2.}}
Table~\ref{tab:runtime} shows the runtime in seconds for computing the entropy values on subsets of the satellite data. 
We compute exact entropies and do not yet apply the Monte Carlo approximation.
We compare the algorithm with the optimizations from Section~\ref{ssec:size} disabled/enabled.

The unoptimized algorithm can process only three rows in 24 hours. Using the optimizations, up to five rows could be computed in 24 hours. 

\subsubsection*{Results}

Although the optimizations are effective compared to the unoptimized implementation, computing the \emph{exact} entropies remains prohibitively expensive.

\begin{figure}[t]
    \includegraphics[width=\linewidth]{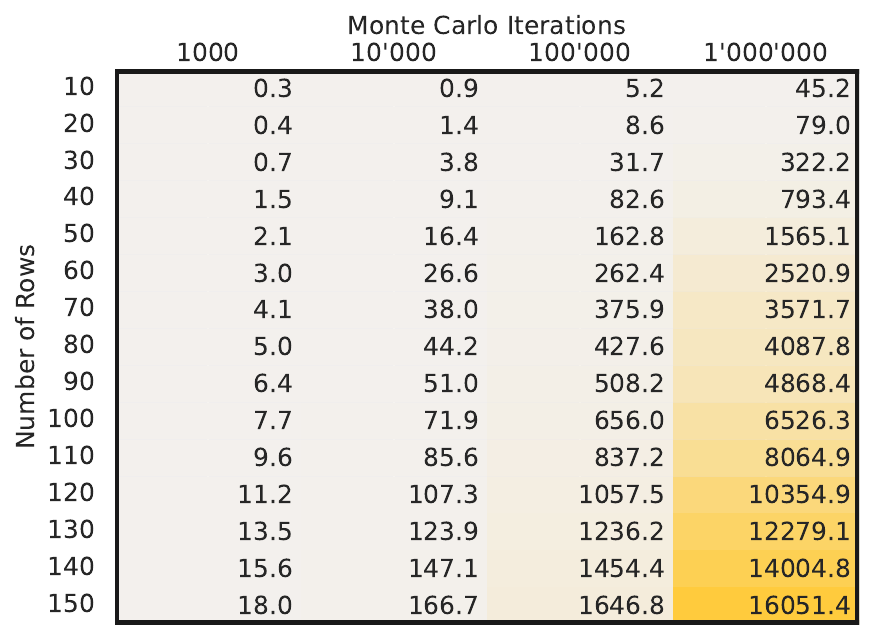}
    \caption{Runtime in seconds on the satellite dataset, for different numbers of Monte Carlo iterations and subset sizes of satellite data. Higher saturation indicates longer runtimes.}
    \label{fig:runtime}
\end{figure}

\begin{figure}[t]
\centering
\begin{tabular}[c]{cc}
\begin{subfigure}[]{0.21\textwidth}
\includegraphics[width=\textwidth]{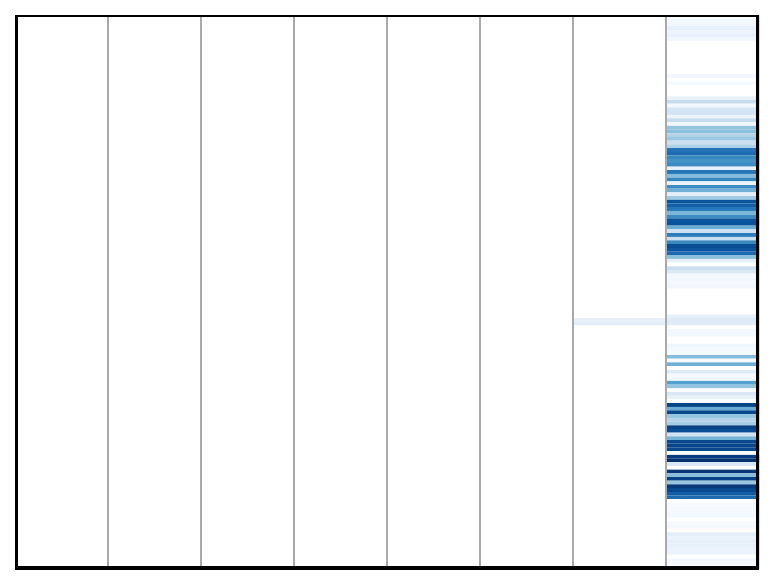}
\caption{1,000 iterations} 
\label{fig:satellites-mc-1000}
\end{subfigure}  
&
\begin{subfigure}[]{0.21\textwidth}
\includegraphics[width=\textwidth]{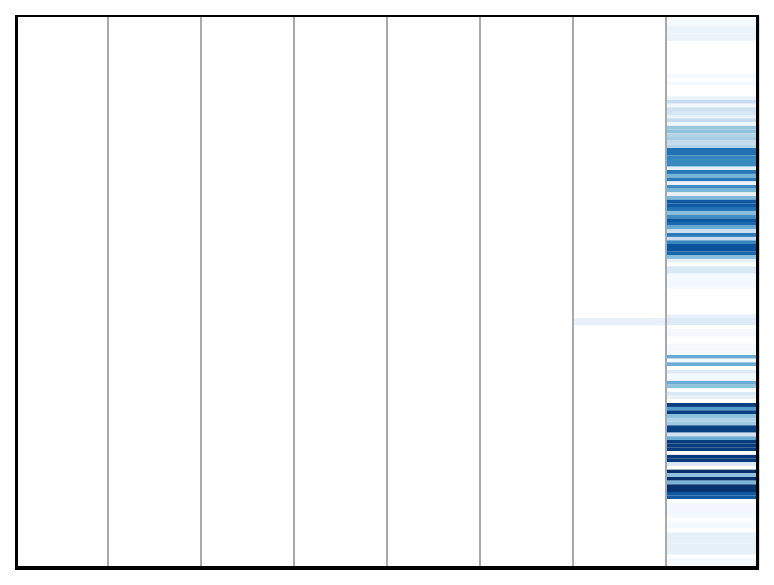}
\caption{1,000,000 iterations} 
\label{fig:satellites-mc-1000000}
\end{subfigure}
\end{tabular}
\caption{\enquote{Plaque tests} applied to the satellites dataset using the Monte Carlo approximation for different numbers of iterations.}
\label{fig:satellites-different-mcs}
\end{figure}

\medskip\noindent{\sffamily\textbf{RQ3.}}
We next explore the Monte Carlo approximation, in combination with our optimizations from Section~\ref{ssec:size}.

Figure~\ref{fig:runtime} shows the runtimes in seconds for different subsets of the satellite data and iterations of the Monte Carlo method. 

For reasonably large subsets, runtime scales linearly with the number of iterations, while input size influences runtime more heavily. Calculating the entropies for 150 rows takes around 4.5 hours at 1,000,000 iterations. We reach an accuracy of approx.\ 0.01 with 99\% confidence at 100,000 iterations (see Figure~\ref{fig:accuracy}), which takes around 30 minutes to calculate.

Figure~\ref{fig:satellites-different-mcs} shows the visualization of the entropies, computed with 1,000 and 1,000,000 iterations, respectively. Despite this disparity in the number of iterations and the expected effect on accuracy, there are only minuscule differences in the generated images. This suggests that we may use a smaller number of Monte Carlo iterations to reduce runtime considerably (in this case by three orders of magnitude) with negligible impact on the visualization. 
In fact, using the results of the entropy computation for 1,000 and 1,000,000 iterations, the maximum entropy difference between the corresponding cells from these two computations is about 0.048. Most differences are distinctly smaller. Of 117 cells with an entropy below 1, for only 9 cells, the difference exceeded a value of 0.02.
However, as we calibrate the colors based on the minimum and maximum values of a table, the visual impact of deviations in entropy values depends on the distance between these values and may be more noticeable if this distance is small.

\subsubsection*{Results}

The approximation greatly improves the runtime behavior.
Since we use the entropy values as a basis for visualization, small deviations in accuracy lead to equally small deviations in the color scale. 
The differences will most likely not be discernible to the human eye.

However, scalability to larger inputs remains a challenge and will require further improvements. 
We point out ideas for future optimizations as part of our outlook in Section~\ref{sec:outlook}.

\section{Related Work}
\label{sec:related}

This article is an extended version of~\cite{shortversion} and contains proofs of all theorems mentioned in the shorter version. Here, we also provide experiments on more datasets and more detailed statistics and considerations on the entropies and their computation runtimes.

\paragraph{Functional Dependencies}

Dependency discovery is an established and active field, and we refer to~\cite{DBLP:series/synthesis/2018Abedjan} for an overview.
Visualizing dependencies is not as well studied, and existing approaches, such as sunburst diagrams or graph-based visualizations~\cite{10.1145/3132847.3133180} do not take the data instance into account.
In contrast, our plaque test does not visualize the dependencies per se, but the redundancies captured by them in the data.

In our visualization, we impose heat maps over relational data. Heat maps have also been explored elsewhere, to visualize the frequency of updates~\cite{DBLP:conf/cidr/BleifussBKNS19}.

\paragraph{Information Theory and Databases}

We build on an information-theoretic framework developed by Arenas and Libkin~\cite{DBLP:conf/pods/ArenasL03}, which aligns with classic normalization theory. Notably, we are not aware of any earlier implementations of this framework.

In an orthogonal effort, Lee~\cite{1702145} proposed entropies at the instance level which are not tied to a set of functional dependencies. Therefore, this approach does not lend itself to the plaque test proposed here.

More recently, information theory has also been investigated within database theory research in the context of deciding query containment~\cite{10.1145/3375395.3387645}, which is an orthogonal use case.

\section{Outlook}
\label{sec:outlook}

We propose a plaque test to visualize redundancies in relational data based on functional dependencies. 
As our discussion of real-world examples shows, the presence of plaque reveals interesting redundancies in the data. As we have seen, the plaque may single out prominent dependencies, reveal the need for schema normalization, or expose data as meaningless.

In our experiments, we work with data sets of only a modest size.
Scaling to larger datasets is a challenge for future work, and we see optimization opportunities via parallelization or dynamic programming.
Further, as we point out, the visual impact of inaccuracies introduced by Monte Carlo approximation depends on the difference between minimum and maximum entropy values. In future work, this could serve as a heuristic to automatically select a suitable number of iterations.

Once we are able to scale to larger datasets, we will need to address the consumability of our visualization. For example, we may allow users to interactively cluster cells with plaque for easy browsing and inspection. 

We may allow users to examine the effect of individual dependencies by excluding them from the visualization in an exploratory manner.
 This raises the question whether the entropies can be computed incrementally.

We also plan to investigate the visualization of other kinds of dependencies, such as join dependencies, since they are also covered by the underlying information-theoretic framework.

\balance

\medskip\noindent
{\sffamily\textbf{Acknowledgments.}}
This work was partially funded by \emph{Deut\-sche For\-schungsgemein\-schaft} (DFG, German Research Foundation) grant \#385808805.
We thank Meike Klettke for her feedback on an earlier version of this article.

\bibliographystyle{ACM-Reference-Format}
\bibliography{references}


\begin{thebibliography}{11}


\ifx \showCODEN    \undefined \def \showCODEN     #1{\unskip}     \fi
\ifx \showDOI      \undefined \def \showDOI       #1{#1}\fi
\ifx \showISBNx    \undefined \def \showISBNx     #1{\unskip}     \fi
\ifx \showISBNxiii \undefined \def \showISBNxiii  #1{\unskip}     \fi
\ifx \showISSN     \undefined \def \showISSN      #1{\unskip}     \fi
\ifx \showLCCN     \undefined \def \showLCCN      #1{\unskip}     \fi
\ifx \shownote     \undefined \def \shownote      #1{#1}          \fi
\ifx \showarticletitle \undefined \def \showarticletitle #1{#1}   \fi
\ifx \showURL      \undefined \def \showURL       {\relax}        \fi
\providecommand\bibfield[2]{#2}
\providecommand\bibinfo[2]{#2}
\providecommand\natexlab[1]{#1}
\providecommand\showeprint[2][]{arXiv:#2}

\bibitem[\protect\citeauthoryear{Abedjan, Golab, Naumann, and
  Papenbrock}{Abedjan et~al\mbox{.}}{2018}]%
        {DBLP:series/synthesis/2018Abedjan}
\bibfield{author}{\bibinfo{person}{Ziawasch Abedjan}, \bibinfo{person}{Lukasz
  Golab}, \bibinfo{person}{Felix Naumann}, {and} \bibinfo{person}{Thorsten
  Papenbrock}.} \bibinfo{year}{2018}\natexlab{}.
\newblock \bibinfo{booktitle}{\emph{Data Profiling}}.
\newblock \bibinfo{publisher}{Morgan \& Claypool Publishers}.
\newblock


\bibitem[\protect\citeauthoryear{Abiteboul, Hull, and Vianu}{Abiteboul
  et~al\mbox{.}}{1995}]%
        {DBLP:books/aw/AbiteboulHV95}
\bibfield{author}{\bibinfo{person}{Serge Abiteboul}, \bibinfo{person}{Richard
  Hull}, {and} \bibinfo{person}{Victor Vianu}.}
  \bibinfo{year}{1995}\natexlab{}.
\newblock \bibinfo{booktitle}{\emph{Foundations of Databases}}.
\newblock \bibinfo{publisher}{Addison-Wesley}.
\newblock
\showISBNx{0-201-53771-0}


\bibitem[\protect\citeauthoryear{Abo~Khamis, Kolaitis, Ngo, and
  Suciu}{Abo~Khamis et~al\mbox{.}}{2020}]%
        {10.1145/3375395.3387645}
\bibfield{author}{\bibinfo{person}{Mahmoud Abo~Khamis},
  \bibinfo{person}{Phokion~G. Kolaitis}, \bibinfo{person}{Hung~Q. Ngo}, {and}
  \bibinfo{person}{Dan Suciu}.} \bibinfo{year}{2020}\natexlab{}.
\newblock \showarticletitle{Bag Query Containment and Information Theory}. In
  \bibinfo{booktitle}{\emph{Proc.\ PODS}}. \bibinfo{publisher}{ACM},
  \bibinfo{pages}{95–112}.
\newblock
\showISBNx{9781450371087}


\bibitem[\protect\citeauthoryear{Arenas and Libkin}{Arenas and Libkin}{2003}]%
        {DBLP:conf/pods/ArenasL03}
\bibfield{author}{\bibinfo{person}{Marcelo Arenas} {and}
  \bibinfo{person}{Leonid Libkin}.} \bibinfo{year}{2003}\natexlab{}.
\newblock \showarticletitle{An information-theoretic approach to normal forms
  for relational and {XML} data}. In \bibinfo{booktitle}{\emph{Proc.\ PODS}}.
  \bibinfo{publisher}{ACM}, \bibinfo{pages}{15--26}.
\newblock


\bibitem[\protect\citeauthoryear{Bleifu{\ss}, Bornemann, Kalashnikov, Naumann,
  and Srivastava}{Bleifu{\ss} et~al\mbox{.}}{2019}]%
        {DBLP:conf/cidr/BleifussBKNS19}
\bibfield{author}{\bibinfo{person}{Tobias Bleifu{\ss}}, \bibinfo{person}{Leon
  Bornemann}, \bibinfo{person}{Dmitri~V. Kalashnikov}, \bibinfo{person}{Felix
  Naumann}, {and} \bibinfo{person}{Divesh Srivastava}.}
  \bibinfo{year}{2019}\natexlab{}.
\newblock \showarticletitle{DBChEx: Interactive Exploration of Data and Schema
  Change}. In \bibinfo{booktitle}{\emph{Proc.\ CIDR}}.
\newblock


\bibitem[\protect\citeauthoryear{Cover and Thomas}{Cover and Thomas}{2006}]%
        {CovTho}
\bibfield{author}{\bibinfo{person}{Thomas~M. Cover} {and}
  \bibinfo{person}{Joy~A. Thomas}.} \bibinfo{year}{2006}\natexlab{}.
\newblock \bibinfo{booktitle}{\emph{Elements of information theory}
  (\bibinfo{edition}{second} ed.)}.
\newblock \bibinfo{publisher}{John Wiley \& Sons, Hoboken, NJ}. xxiv+748 pages.
\newblock
\showISBNx{978-0-471-24195-9; 0-471-24195-4}


\bibitem[\protect\citeauthoryear{K{\"{o}}hnen, Klessinger, Zumbr{\"{a}}gel, and
  Scherzinger}{K{\"{o}}hnen et~al\mbox{.}}{2023}]%
        {shortversion}
\bibfield{author}{\bibinfo{person}{Christoph K{\"{o}}hnen},
  \bibinfo{person}{Stefan Klessinger}, \bibinfo{person}{Jens Zumbr{\"{a}}gel},
  {and} \bibinfo{person}{Stefanie Scherzinger}.}
  \bibinfo{year}{2023}\natexlab{}.
\newblock \showarticletitle{A Plaque Test for Redundancies in Relational Data}.
  In \bibinfo{booktitle}{\emph{QDB at VLDB}} \emph{(\bibinfo{series}{{CEUR}
  Workshop Proceedings})}, Vol.~\bibinfo{volume}{3462}.
\newblock


\bibitem[\protect\citeauthoryear{Kruse, Hahn, Walter, and Naumann}{Kruse
  et~al\mbox{.}}{2017}]%
        {10.1145/3132847.3133180}
\bibfield{author}{\bibinfo{person}{Sebastian Kruse}, \bibinfo{person}{David
  Hahn}, \bibinfo{person}{Marius Walter}, {and} \bibinfo{person}{Felix
  Naumann}.} \bibinfo{year}{2017}\natexlab{}.
\newblock \showarticletitle{Metacrate: Organize and Analyze Millions of Data
  Profiles}. In \bibinfo{booktitle}{\emph{Proc.\ CIKM}}.
  \bibinfo{publisher}{ACM}, \bibinfo{pages}{2483–2486}.
\newblock
\showISBNx{9781450349185}


\bibitem[\protect\citeauthoryear{Lee}{Lee}{1987}]%
        {1702145}
\bibfield{author}{\bibinfo{person}{Tony~T. Lee}.}
  \bibinfo{year}{1987}\natexlab{}.
\newblock \showarticletitle{An Information-Theoretic Analysis of Relational
  Databases—Part I: Data Dependencies and Information Metric}.
\newblock \bibinfo{journal}{\emph{IEEE Transactions on Software Engineering}}
  \bibinfo{volume}{SE-13}, \bibinfo{number}{10} (\bibinfo{year}{1987}),
  \bibinfo{pages}{1049--1061}.
\newblock


\bibitem[\protect\citeauthoryear{Mitzenmacher and Upfal}{Mitzenmacher and
  Upfal}{2017}]%
        {mitzenmacher2017}
\bibfield{author}{\bibinfo{person}{Michael Mitzenmacher} {and}
  \bibinfo{person}{Eli Upfal}.} \bibinfo{year}{2017}\natexlab{}.
\newblock \bibinfo{booktitle}{\emph{Probability and Computing: Randomization
  and Probabilistic Techniques in Algorithms and Data Analysis}}.
\newblock \bibinfo{publisher}{Cambridge University Press}.
\newblock
\showISBNx{9781107154889}
\showLCCN{2016041654}


\bibitem[\protect\citeauthoryear{Papenbrock, Bergmann, Finke, Zwiener, and
  Naumann}{Papenbrock et~al\mbox{.}}{2015}]%
        {DBLP:journals/pvldb/PapenbrockBFZN15}
\bibfield{author}{\bibinfo{person}{Thorsten Papenbrock}, \bibinfo{person}{Tanja
  Bergmann}, \bibinfo{person}{Moritz Finke}, \bibinfo{person}{Jakob Zwiener},
  {and} \bibinfo{person}{Felix Naumann}.} \bibinfo{year}{2015}\natexlab{}.
\newblock \showarticletitle{Data Profiling with Metanome}.
\newblock \bibinfo{journal}{\emph{Proc. VLDB Endow.}} \bibinfo{volume}{8},
  \bibinfo{number}{12} (\bibinfo{year}{2015}), \bibinfo{pages}{1860--1863}.
\newblock


\end{thebibliography}

\end{document}